\documentclass[12pt,a4paper]{article}

\usepackage[centertags]{amsmath}
\usepackage{amsfonts,amsthm,amssymb}
\usepackage{graphicx}
\usepackage{bbm}
\usepackage{dsfont}
\usepackage{thmtools}
\usepackage{thm-restate}
\usepackage{mathtools}
\DeclarePairedDelimiter{\abs}{\lvert}{\rvert}
\usepackage{booktabs}
\usepackage{appendix}
\usepackage[hmargin=2.6cm,vmargin=2.6cm]{geometry}
\usepackage{etoolbox}
\usepackage{tikz}
\usepackage{comment}
\usepackage{float}
\usepackage{natbib}
\usepackage[hidelinks]{hyperref}
\usepackage{soul}
\usepackage{caption,subcaption}
\usepackage[shortlabels]{enumitem}
\usepackage{xcolor,framed}

\colorlet{lightgray}{gray!60}
\theoremstyle{definition}
\def\indic{\mathbbm{1}}

\newcommand{\real}{\mathbb{R}}

\newcommand{\be}{\begin{equation}}
\newcommand{\ee}{\end{equation}}

\newcommand{\1}{\mathbf{1}}
\newcommand{\defeq}{\mathrel{\mathop:}=}
\newcommand{\R}{\mathbf{R}}
\newcommand{\A}{\mathbf{A}}
\newtheorem{lemma}{Lemma}

\newtheorem{proposition}{Proposition}
\newtheorem{observation}{Observation}
\newtheorem{theorem}{Theorem}

\newtheorem{corollary}{Corollary}
\newtheorem{definition}{Definition}
\newtheorem{example}{Example}
\newtheorem{remark}{Remark}[section]
\title{Strategic Centrality and the Emergence of Core-Periphery Networks}

\author{
Itai Arieli\thanks{Department of Economics, University of Toronto and the Technion.}
\and
Jo\~ao Correia-da-Silva\thanks{FEP School of Economics and Management, University of Porto.}
\and
Wade Hann-Caruthers\thanks{FEP School of Economics and Management, University of Porto.}
\and
Anna Rubinchik\thanks{FEP School of Economics and Management, University of Porto.}
}

\begin{document}

\maketitle
\begin{abstract}
We study a network formation game in which agents sponsor links at a linear cost in order to maximize centrality, defined as a weighted sum of walk counts with positive and weakly decreasing weights. This class includes Katz Bonacich centrality and total communicability and captures environments in which access decays with distance. Our main result is a sharp equilibrium characterization: every Nash equilibrium network is core periphery, meaning there exists a set $C$ such that every node is linked to every node in $C$, and there are no other edges. The driving force is that linking to better connected agents generates many additional short connections at once, so incentives concentrate links on a dense subset. Finally, we show that the welfare maximizing network is always either empty or complete.
\end{abstract}

\section{Introduction}

In many social, economic, and information networks, an agent's payoff depends on her position in the network. Agents may value visibility, influence, or access to others, and a common way to quantify such advantages is through centrality measures. While centrality measures are commonly used in the literature to rank nodes \emph{ex post}, a basic theoretical question is what network architectures emerge when agents strategically form links \emph{ex ante} in order to increase their own centrality.

We study a network formation game in which agents sponsor links, trading off linear link maintenance costs against the benefits of being well positioned in the resulting network. We focus on a broad class of centrality measures that aggregate access along walks of all lengths, with shorter walks receiving weakly higher weight. Let \(A\) denote the adjacency matrix of the induced network and let \(\mathbf{1}\) be the all-ones vector. We write the centrality benefit of agent \(i\) as \(b_i(A)=(f(A)\mathbf{1})_i\), where \(f(z)=\sum_{k\ge 1} a_k z^k\) with \(a_k>0\) and \(a_1\ge a_2\ge\cdots\). Since \((A^k\mathbf{1})_i\) equals the total number of length-\(k\) walks that start at \(i\), this means that \(b_i(A)\) is a weighted sum of walk counts starting from \(i\). This specification is natural in environments with distance decay, and it nests prominent measures used in applications, including total communicability and Katz--Bonacich centrality.

\medskip
\noindent\textbf{Core-periphery architectures.}
A striking regularity across many empirical settings is that networks often display a core-periphery structure: a set of core nodes is densely interconnected, while the remaining nodes connect to the core but have relatively few links among themselves. Core-periphery networks have been documented in domains ranging from interbank lending markets to R\&D collaborations, making them a central object of study in both network science and economic models. In a classic contribution, \citet{HojmanSzeidl2008} derive a sharp core-periphery prediction in a homogeneous strategic model with distance decay and decreasing returns: the unique equilibrium network is a periphery-sponsored star. Their analysis provides a benchmark explanation for why strong centralization can arise endogenously even among identical agents.

\medskip
\noindent\textbf{Main result.}
Our main result shows that a general core-periphery architecture is a robust equilibrium implication of centrality maximization.
In any centrality network equilibrium, the resulting network is a core-periphery network (Theorem~\ref{theorem:main}).
That is, there exists a set of nodes $C$ such that every node is connected to all nodes in $C$, and no nodes outside of $C$ are connected to each other.

The connection between the model and the prediction is driven by two features of the environment. First, link benefits are \emph{cumulative}: a new link expands access through walks of all lengths with positive weights. In particular, linking to a well connected node does not only create a direct connection, but also creates many new short and long walks that start at the agent by routing through that node and its neighborhood. Second, because walk weights are weakly decreasing in length, the main gains from such a link come from relatively short walks, so access to a densely connected region is especially valuable. Together these features generate a structural complementarity: links are most valuable when they connect into parts of the network that are themselves dense. This rich get richer force resembles the intuition of random attachment models such as \citet{jackson2007meeting}, but here it operates through deterministic best responses rather than probabilistic meetings. The core periphery form is the minimal architecture consistent with these incentives: the core is dense because core nodes value linking to other core nodes most, and the periphery attaches to the core because connecting into the dense region yields large centrality gains at low marginal cost.

Importantly, the characterization allows for a nontrivial core, rather than a single center. In particular, the core can be empty (yielding the empty network) or all of $N$ (yielding the complete network), and for intermediate cases the core can contain multiple agents.

\medskip
\noindent\textbf{Efficiency.}
We complement the equilibrium analysis with a welfare characterization. Social welfare is the sum of the benefits enjoyed by the agents minus total costs of maintaining the links. Despite the richness of feasible architectures, the welfare maximizing network is always extreme: it is either the empty network or the complete network (Theorem~\ref{theorem:efficiecy}).
Thus, while equilibrium architecture is generically core-periphery, efficiency collapses to the two polar networks.

\subsection{Related Literature}

Analysis of networks spans a variety of fields, including economics, mathematics, physics, sociology, and computer science; see \citet{Jackson2008}. The network formation literature studies how the process by which links are formed, together with agents' objectives, shapes the resulting network architecture. In the seminal model of \citet{JacksonWolinsky1996}, links form only when both endpoints benefit, and pairwise stable and efficient networks need not coincide. As \citet{BlochJackson2006} show, the details of the link formation protocol matter: the set of pairwise stable networks can differ from the Nash equilibrium outcomes of Myerson's linking game, where players announce desired links and a link forms only if both endpoints agree. We use Nash equilibrium with unilateral link sponsorship, as in \citet{BalaGoyal2000}: one agent can pay to create a link, while the resulting connection is available to both endpoints.

A number of papers derive core-periphery structures from endogenous information or access considerations. \citet{HojmanSzeidl2008} study a network formation model in which benefits decay with graph distance and exhibit strong decreasing returns; their unique nonempty equilibrium is a periphery-sponsored star. \citet{galeotti2010law} study information acquisition and communication, where agents choose both how much information to acquire personally and whom to link to in order to access others' information. In strict equilibrium, information acquisition is concentrated among a small set of hubs, while the remaining agents link to them. \citet{HerskovicRamos2020} study directed information acquisition in a beauty-contest environment. Agents link to observe other agents' signals, and signals that are observed by more agents become more useful for predicting the average action. Under their baseline cost assumption, strict equilibrium information structures are hierarchical directed networks; with the additional assumption of weakly increasing marginal costs, strict equilibrium information structures are core-periphery directed networks.

Our contribution is complementary to these results, but starts from a different question. We ask what network architectures are forced by centrality maximization itself. This distinction is important because the result applies to every equilibrium network: under linear link costs, every Nash equilibrium of our game induces a core-periphery network. The reason is that walk-based centrality makes the value of a link cumulative. A link to a well-connected region does not only create one additional connection; it creates access to many weighted walks generated by that region. This force makes dense regions especially attractive as linking targets and rules out equilibrium architectures that are not organized around a common core. Thus, the paper identifies a centrality-based route to core-periphery formation and shows that, for a broad class of decreasing walk-based centrality objectives, core-periphery is not merely one possible outcome but a necessary equilibrium implication.

Our paper is also related to the literature connecting network position to equilibrium behavior in games played on fixed networks. Most closely, \citet{BallesterCalvoArmengolZenou2006} study a linear-quadratic game with local complementarities and show that, for a fixed network, equilibrium actions are proportional to Bonacich centralities. As discussed in Subsection~\ref{subsec:strategic-foundation}, this Bonacich--Nash linkage provides a strategic rationale for the Katz--Bonacich objective in our model. The difference is that in their analysis the network is fixed and centrality determines equilibrium behavior, whereas in our model agents form the network itself in anticipation of the continuation value generated by their position.

Our work is also related to the literature on nested split graphs and centrality-based network formation. \citet{KonigTessoneZenou2014} study a dynamic formation model based on Bonacich centrality and stability concepts, finding that nested split graphs emerge. Similarly, \citet{BelhajBervoetsDeroian2016} identify nested graphs and complete networks as efficient in games with local complementarities. Our contribution is to provide a sharp structural characterization for the entire class of walk-based measures using standard Nash equilibrium. We show that under monotonicity and decay, equilibrium networks collapse specifically to a two-tier core-periphery structure, and that the tension between equilibrium stratification and efficient integration (Theorem~\ref{theorem:efficiecy}) is a general property of this class.

Finally, a recent strand studies formation when agents maximize spectral centralities. The closest comparison is the formation game analyzed by \citet{CatalanoCastaldoComoFagnani2025}. While they frame their analysis within the Bonacich family, they specifically study PageRank, where the adjacency matrix is row-normalized and every outgoing link is chosen by the node that creates it. This normalization creates a dilution effect: connecting to a high-degree node provides less influence than connecting to a low-degree node. Consequently, they find that equilibria tend to be fragmented, featuring local cycles and disconnected components. In contrast, the class of centralities we consider, including standard Katz and communicability, is cumulative: linking to a hub provides access to walks without diluting the value of existing connections. This distinction is fundamental: while relative measures drive fragmentation, our results show that cumulative walk-based measures drive cohesion and core-periphery stratification.

\subsection{Why maximize centrality? A strategic rationale}
\label{subsec:strategic-foundation}

A simple transmission interpretation gives a foundation for the walk-based payoff. Suppose that, after the network is formed, agent \(i\) generates an opportunity, such as a job lead, a client lead, or valuable information. The opportunity is first transmitted to \(i\)'s neighbors and may then continue to circulate through the network. After \(t\) previous transmissions, each recipient forwards it to each neighbor with probability \(\rho_t\in(0,1]\). Each successful transmission, rather than each distinct recipient, generates payoff-relevant value for the originator. Thus every walk of length \(k\) starting from \(i\) represents one possible transmission event chain, possibly revisiting agents, including \(i\) herself. Its weight is \(\rho_1\cdots\rho_{k-1}\), with direct transmissions receiving weight one. Hence the expected transmission value generated by \(i\)'s position is a weighted sum of walks starting from \(i\). Conversely, any positive weakly decreasing sequence \((a_k)\) can be interpreted this way, up to normalization, by setting \(\rho_k=a_{k+1}/a_k\).

A related strategic rationale comes from games played on fixed networks. To illustrate this point, consider the Katz--Bonacich specification. Suppose that the interaction has two stages. In the first stage, agents choose which links to sponsor and pay the associated link costs. These choices determine the realized network. In the second stage, agents play a linear-quadratic network game with local complementarities on this realized network, as in \citet{BallesterCalvoArmengolZenou2006}.

The key observation is their Bonacich--Nash linkage. For every fixed network, the unique interior Nash equilibrium action of each player is proportional to her Bonacich centrality in that network. Moreover, equilibrium utility is tied to this equilibrium action, and therefore to Bonacich centrality. In the benchmark case without the aggregate substitutability term, equilibrium utility is proportional to the square of Bonacich centrality.

Thus, the point is not that the Ballester--Calvó-Armengol--Zenou model delivers exactly the linear payoff \(b_i(A)-c|s_i|\). Rather, it shows that Katz--Bonacich centrality is a payoff-relevant object in a standard strategic interaction played on a fixed network. Our payoff specification can therefore be viewed as a tractable reduced form that captures the incentive to occupy a favorable Bonacich position in the induced network, while the cost term captures the resources required to sponsor links.

Walk-based centralities also arise naturally in models of production networks. In fixed production-network models, input-output linkages determine how prices, shocks, and profits propagate through direct and indirect supply chains. For example, \citet{AcemogluCarvalhoOzdaglarTahbazSalehi2012} show that the network of input-output linkages shapes aggregate fluctuations through the propagation of firm-level shocks. Relatedly, \citet{HV} show that, in a Cobb--Douglas production economy, equilibrium profits can be represented using a Bonacich-type centrality induced by the input-output matrix. These papers take the production network as fixed. Closer to our endogenous-network perspective, \citet{MND} study firms that strategically choose input shares and show that profit maximization is equivalent to maximizing eigenvector centrality in the induced production network.

Taken together, these examples motivate centrality maximization as a reduced-form way to capture environments in which network position generates payoff-relevant access, influence, or propagation. Our analysis asks which network architectures emerge when agents form links in anticipation of the strategic value generated by their position in the network.
\section{Centrality Network Formation}\label{sec:model}

We consider a strategic model of network formation with a finite set of agents $N=\{1,\dots,n\}$. 
Each agent $i$ chooses a (possibly empty) set of agents to whom she sponsors a link,
\[
s_i \subseteq N\setminus\{i\}.
\]
We denote the strategy profile by $s=(s_1,\dots,s_n)$. The induced network is the undirected graph $G(s)=(N,E(s))$ whose edge set is
\[
E(s)\defeq \bigl\{\{i,j\}\subseteq N : i\neq j,\ \ j\in s_i \text{ or } i\in s_j\bigr\}.
\]
Thus a link $\{i,j\}$ is formed if at least one of the two agents sponsors it. Let $A=A(s)$ denote the adjacency matrix of $G(s)$, namely $A_{ij}=1$ if $\{i,j\}\in E(s)$ and $A_{ij}=0$ otherwise (with $A_{ii}=0$). 
Since $G(s)$ is undirected, $A(s)$ is symmetric. Note in particular that the benefits from an edge are shared, in the sense that connectivity is mutual, while the cost of an edge is borne by the agent(s) who sponsor it through their choice of $s_i$.

\subsection{Payoffs: Centrality and Costs}

Agents derive utility from their position in the network. We focus on a general class of centrality measures where an agent's status is determined by the total weight of all walks emanating from them.

For an adjacency matrix $A$, the term $(A^k)_{ij}$ represents the number of walks of length $k$ between agents $i$ and $j$.
The total number walks of length $k$ that start at agent $i$ is thus given by
\[
(A^k\mathbf{1})_i=\sum_{j\in N}(A^k)_{ij},
\]
where $\mathbf{1}\in\real^n$ is the vector of all ones. We assign a weight $a_k$ to walks of length $k$.
We assume that the value of a connection decays with distance, and therefore impose that the weights are positive and weakly decreasing:
\[
a_1 \ge a_2 \ge \cdots > 0.
\]

The centrality of agent $i$, denoted $b_i(A)$, is the weighted sum of all walks that start from agent $i$,
\[
b_i(A)= \sum_{k=1}^\infty a_k (A^k \mathbf{1})_i
= \left( \sum_{k=1}^\infty a_k A^k \mathbf{1} \right)_i.
\]
Agent $i$'s gross benefit from a network $G$ with adjacency matrix $A$ is given by her centrality $b_i(A)$.

We assume throughout that, for every network on $N$, the series $\sum_{k\ge 1} a_k A^k$ converges.\footnote{Equivalently, $\sum_{k\ge 1} a_k z^k$ has radius of convergence larger than $n-1$ and hence $\sum_{k\ge 1} a_k A^k$ converges for every adjacency matrix on $n$ nodes.} This formulation captures the idea that being connected to central agents is valuable because it provides access to their stock of walks (indirect connections).

Agents trade off this centrality benefit against the cost of maintaining relationships. The payoff to player $i$ under strategy profile $s$ is
\begin{align}\label{Defpayoff}
u_i(s)= b_i(A(s)) - c \, |s_i|,
\end{align}
where $c > 0$ is the constant marginal cost of sponsoring a link.\footnote{In Section~\ref{sec:convexcost} we also discuss increasing marginal costs.}

\subsection{Examples of Centrality Measures}

Our framework encompasses several prominent centrality measures used in social network analysis and economics. Each of these measures corresponds to a particular choice of the generating function $f(z)=\sum_{k\ge 1} a_k z^k$.

\paragraph{Katz--Bonacich Centrality.}
Perhaps the most widely used measure in economic models, Katz centrality assumes geometric decay of influence. We fix $\delta\in(0,1/(n-1))$,\footnote{This bound guarantees that the power series $\sum_{k\ge 0}\delta^k A^k$ converges, and hence $(I-\delta A)^{-1}$ is well defined, for every feasible adjacency matrix $A$.} so that the utility is well defined for every strategy profile. The weights are $a_k=\delta^k$, and the centrality vector corresponds to the resolvent of the adjacency matrix:
\[
b(A)=\delta A(I-\delta A)^{-1}\mathbf{1}.
\]

\paragraph{Total Communicability (Matrix Exponential).}
Based on diffusion interpretations, this measure assigns weights that decay factorially, $a_k=\beta^k/k!$, so the series converges for any adjacency matrix. The resulting centrality is given by the row sums of the matrix exponential:
\[
b(A)=e^{\beta A}\mathbf{1}.
\]
This specification is standard in the network literature, and is closely related to other matrix-function based measures; see \citet{estrada2005subgraph,EstradaHigham2010}.

\paragraph{Logarithmic Centrality.}
Used to capture logarithmic returns to distance, this measure sets weights $a_k=\alpha^k/k$. It corresponds to the matrix function:
\[
b(A)=-\log(I-\alpha A)\mathbf{1}.
\]

\paragraph{Mittag-Leffler Centrality.}
This family of measures generalizes the matrix exponential, allowing finer control over how quickly the value of a walk decays. It is defined using the Mittag-Leffler function $E_\alpha(z)=\sum_{k=0}^\infty z^k/\Gamma(\alpha k+1)$ with parameters $\alpha,\beta>0$:
\[
b_i(A)=\bigl(E_\alpha(\beta A)\1\bigr)_i
      =\sum_{k=0}^\infty \frac{\beta^k}{\Gamma(\alpha k+1)}(A^k\1)_i,
\qquad a_k=\frac{\beta^k}{\Gamma(\alpha k+1)}.
\]
See \citet{haubold2011mittag}. To ensure that the weights are decreasing ($a_1\ge a_2\ge \cdots$), the parameter $\beta$ must not be too large; specifically, we require $\beta\le \frac{\Gamma(2\alpha+1)}{\Gamma(\alpha+1)}$.

\section{Equilibrium Architecture}\label{sec:mainth}

Our main theoretical goal is to characterize the structural features of networks that emerge in equilibrium. In games of network formation, equilibria can often be complex or fragmented. We show that, under our walk-based measures with decreasing weights, equilibrium networks must take a highly ordered form.

We formally define this class of graphs as follows.

\begin{definition}[Core-periphery network]
A graph $G=(N,E)$ is a \emph{core-periphery network} if there exists a subset $C\subseteq N$ (the ``core'') such that for any distinct $i,j\in N$,
\[
\{i,j\}\in E \iff i\in C \text{ or } j\in C.
\]
Nodes in $N\setminus C$ are referred to as the ``periphery.''
\end{definition}

Equivalently, $C$ forms a clique, $N\setminus C$ forms an independent set, and every node in $N\setminus C$ is linked to every node in $C$.

This definition covers a broad spectrum of architectures depending on the size of the core. At the extremes, it captures the empty network (where $C=\varnothing$) and the complete network (where $C=N$). It also nests the star architecture (where $|C|=1$) often found in previous work. Our framework allows for the intermediate case $1<|C|<N$, representing a multi-agent core that is internally dense and to which the remaining agents attach.

Our main result states that this architecture is the unavoidable outcome of strategic centrality maximization.

\begin{theorem}\label{theorem:main}
In any centrality network equilibrium, the resulting network is a core-periphery network.
\end{theorem}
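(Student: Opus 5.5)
The plan is to compare the marginal value of links by mapping walks injectively from one graph into another. For a network $A$ and a non-edge $\{i,j\}$, let $\Delta_i(j;A)=b_i(A+E_{ij})-b_i(A)$. This is the weighted number of walks from $i$ that use the new edge $\{i,j\}$ at least once. In equilibrium every sponsored link $j\in s_i$ satisfies $\Delta_i(j;A-E_{ij})\ge c$. Every absent edge $\{i,l\}$ satisfies $\Delta_i(l;A)\le c$, and by symmetry also $\Delta_l(i;A)\le c$. The goal is to show that every edge has a universal endpoint, taking $C$ to be the set of nodes adjacent to all others. Equivalently, no edge joins two non-universal nodes.

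\emph{Step 1 (dominance lemma).} If $N(l)\setminus\{i,j\}\subseteq N(j)\setminus\{i,j\}$, then $\Delta_i(j;A)\ge \Delta_i(l;A)$. I would prove this with an explicit injection from walks from $i$ that use the edge $il$ in $A+E_{il}$ to walks from $i$ that use $ij$ in $A+E_{ij}$. The injection relabels $l\leftrightarrow j$ along the walk, and it preserves length, so positivity of the $a_k$ gives the inequality. A swapped step $l\to x$ is valid because $N(l)\subseteq N(j)$ up to $i$ and $j$. Visits to $j$ in the original walk need care, for instance by swapping only the segments between consecutive uses of the new edge. Strictness holds whenever the inclusion is strict and some short walk witnesses it. Monotonicity of walk counts under adding edges gives the analogous comparison when the base graphs differ by extra edges.

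\emph{Step 2 (nestedness).} Suppose $i$ is adjacent to $j$ but not to $l$, while $l$ has a neighbor $x\neq i,j$ that $j$ lacks. Then $j$ and $l$ have incomparable neighbourhoods. Using Step 1 and the equilibrium inequalities for whoever sponsors the edges $ij$ and $lx$, I would derive a contradiction. Roughly, the sponsor of $ij$ values $j$ at least $c$ and $l$ at most $c$, and symmetrically for the sponsor of $lx$. This should force the equilibrium graph to be a nested split graph, meaning the neighbourhoods are totally ordered by inclusion. I expect to need a tie-breaking case in which both inequalities are equalities. I would handle it with the strict version of Step 1, which uses the fact that the $a_k$ are strictly positive.

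\emph{Step 3 (collapse to two tiers), the main obstacle.} A nested split graph may have several degree levels. I must rule out an edge $\{u,v\}$ with both endpoints non-universal. Take such an edge with $u$ of minimal degree among non-universal nodes that have a non-universal neighbor. Let $p$ be a node not adjacent to $v$; by nestedness $N(p)\subseteq N(u)$. Suppose w.l.o.g.\ that $u$ sponsors $uv$, so $\Delta_u(v)\ge c$. The non-edge $\{p,v\}$ gives $\Delta_p(v)\le c$ and $\Delta_v(p)\le c$. I must show that $\Delta_u(v;A-E_{uv})<\Delta_v(p;A)$ or $<\Delta_p(v;A)$, which yields a contradiction. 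This is where decreasing weights enter. The walks from $u$ that use $uv$ are compared to the walks from $v$ that use $vp$. After the first use of the new edge, each walk in the latter family continues from $p$, whose neighbourhood is smaller. However, it is counted from $v$, which has already traversed its larger neighbourhood. I would first try an injection that shifts walks to shorter lengths, so that $a_k\ge a_{k+1}$ compensates for the loss in neighbourhood size. If $v$ itself sponsors the edge, the symmetric argument applies with the roles of the two endpoints swapped. Getting this length-shifting injection right is the step I expect to be delicate.
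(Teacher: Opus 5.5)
\textbf{There is a genuine gap.} It surfaces in Step 3, but it affects the whole strategy. Every inequality you invoke is a \emph{single-link} condition: each sponsored link is individually worth at least $c$, and each absent link is individually worth at most $c$ to either endpoint. Because $u_i$ is supermodular in $s_i$, these conditions are strictly weaker than Nash equilibrium. They do not force a core--periphery network, so no injection can complete Step 3.

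Here is a concrete counterexample. Take $a_1=a_2=a_3=1$ and $a_k=\epsilon^k$ for $k\ge 4$, with $\epsilon$ tiny. This is admissible: the weights are positive, weakly decreasing, and the radius of convergence $1/\epsilon$ exceeds $n-1$. Let $n=11$, with a hub $h$ linked to everyone, a clique $M=\{u,v,m_3,m_4\}$, and six pendants $p,p',\dots$ linked only to $h$. Up to $O(\epsilon^4)$ one computes
$\Delta_u(v;A-E_{uv})=33$, $\Delta_p(v;A)=32$, $\Delta_v(p;A)=23$, $\Delta_p(p';A)=17$, $\Delta_p(h;A-E_{ph})=33$ and $\Delta_v(h;A-E_{vh})=39$.

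Let pendants and $M$-agents sponsor their links to $h$, orient the $M$--$M$ links arbitrarily, and set $c=32.5$. Then:
\begin{itemize}
\item every single-link condition holds strictly;
\item the graph is a nested split graph, so your Step 2 is consistent with it;
\item yet the graph is not core--periphery;
\item the inequality you say you must prove in Step 3 fails, even though $N(p)=\{h\}\subseteq N(u)$.
\end{itemize}
The reason is that the pendant beats $u$ by only one length-2 walk ($p\to v\to u$). Meanwhile $u$'s larger neighbourhood yields more length-3 walks, and weakly decreasing weights allow $a_3=a_2$. This profile fails to be an equilibrium only because a pendant gains $152>4c$ by linking to all of $M$ at once.

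\textbf{That multi-link deviation is the missing idea.} The paper's Proposition~\ref{prop:profit} lets an agent $i$ add links to an entire target set $T$ whose members are sponsored by agents in $K$, chosen so that afterwards $N_{G'}(l)\setminus\{i\}\subseteq N_{G'}(i)\setminus\{l\}$ for all $l\in K$. The injection of Lemma~\ref{lemma:deviation} is length non-increasing and not onto, so it gives $b^{E_i}_{G',i}>\sum_r b^{E_r}_{G,j_r}\ge |T|c$. The last step uses the sponsors' set-wise retention constraints, which are again not single-link conditions.

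The paper then splits on whether some agent receives no sponsored link. If no such agent exists, any agent can add all her missing links and dominate everyone, so the graph is complete. If $j$ is such an agent, $C=s_j$ serves as the core: everyone must link to $C$ (otherwise they would mimic $j$), and a joint addition by an unsponsored agent is used to rule out edges outside $C$. Your Step 1 dominance lemma is fine, but it cannot substitute for this aggregate comparison, and nestedness is never needed.
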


The prevalence of core-periphery structures has been extensively cataloged in network science, dating back to the formalizations by \citet{borgatti2000models}. Theorem~\ref{theorem:main} underscores that this architecture is the robust attractor for strategic agents maximizing broad influence.

By allowing for a non-trivial core ($|C| > 1$), our result bridges the gap between two common extremes found in the literature: the star network (a core of size 1) often predicted by connections models like \citet{BalaGoyal2000}, and the complete clique predicted when benefits are purely local.

This characterization has profound implications for systemic stability. As shown by \citet{acemoglu2015systemic}, architectures that are highly interconnected at the core exhibit a ``robust-yet-fragile'' property: they dampen small shocks effectively but facilitate catastrophic contagion during large shocks. Our result implies that the incentives to maximize centrality naturally drive the system toward these risk-concentrating architectures. Thus, the core-periphery structure is not just an efficient configuration for information flow, but an inevitable equilibrium outcome that may endogenously generate systemic fragility.

\begin{figure}[H]
\centering
\includegraphics[width=0.65\textwidth]{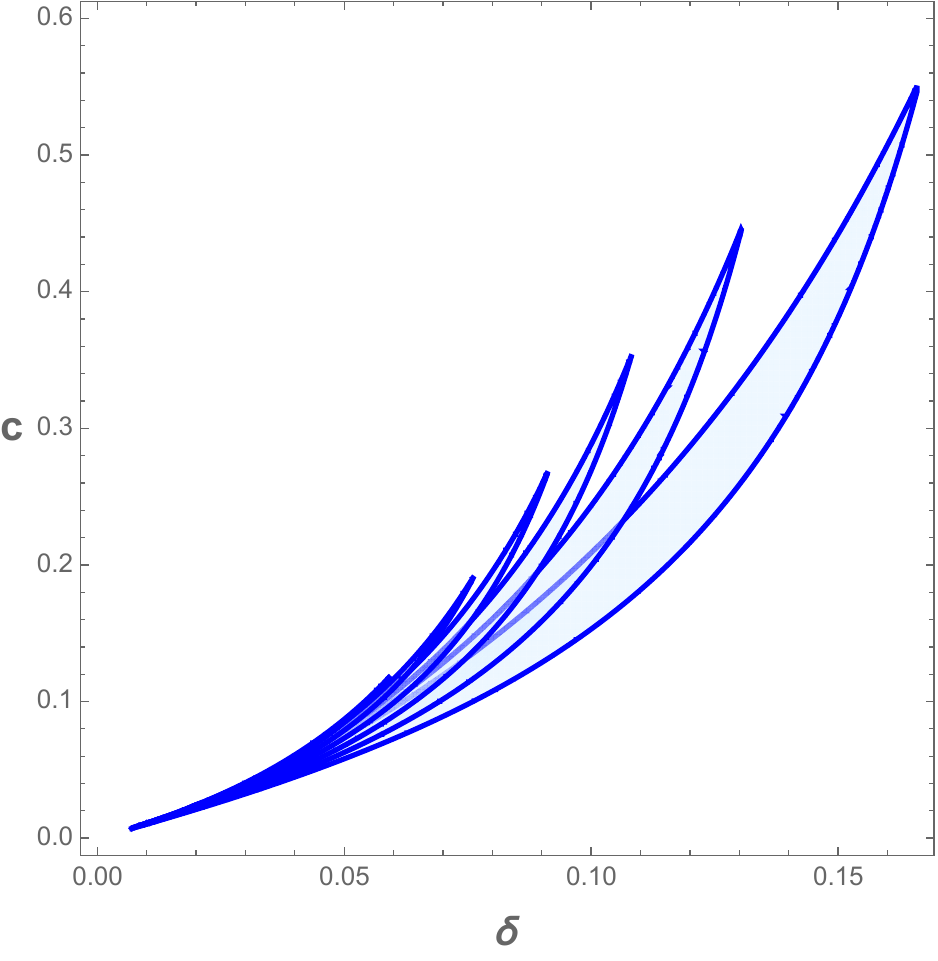}
\caption{The Core Size in Katz-Bonacich}
\label{fig:moons}
\end{figure}

As an illustration of Theorem \ref{theorem:main}, consider an example where $n=10$ agents maximize their Katz-Bonacich centrality (benefit from walks decays geometrically with length), $b(A)=\delta A(I-\delta A)^{-1}\mathbf{1}$.

Figure \ref{fig:moons} plots the parameter region $(\delta, c)$ where a non-trivial core-periphery network is an equilibrium. Each moon-shaped region corresponds to a given size of the core, ranging from $|C|=1$ (the rightmost) to $|C|=8$ (the leftmost).
The regions in the figure are computed using the characterization in Section~\ref{sec:core-feasibility}: for each core size \(k\), equilibrium requires that peripheral agents neither delete their links to the core nor add all missing links to the periphery.
\section{Proof of Theorem \ref{theorem:main}}

The proof of Theorem~\ref{theorem:main} relies on a structural complementarity argument: because link benefits are walk-based, adding an edge becomes more valuable when it connects to agents who already have many paths to the rest of the network, and this effect is stronger at shorter distances (since weights decay with length). As a result, incentives push links to concentrate on a dense subset of agents.

We rule out any architecture that is not core-periphery by exhibiting a profitable deviation. If the network is not core-periphery, we can find an agent who gains by dropping some of her current links and redirecting them toward a denser region of the network, thereby increasing the value of her walk-based benefits.

To formalize the deviation argument, we first establish a condition for when one set of edges yields higher centrality than another.

Let $G=(N,E)$ be a network. For any subset of edges $\mathcal{E} \subseteq E$ and agent $i$, let $\Psi^{\mathcal{E}}_{G,i}$ denote the set of all finite walks in $G$ that start at $i$ and traverse at least one edge in $\mathcal{E}$. 
For any walk $\gamma=(v_0,\dots,v_T)$, denote by $\ell(\gamma)$
the length of $\gamma$.
We define the incremental centrality agent $i$ derives from $\mathcal{E}$ as the weighted sum of these walks:
\[
b^{\mathcal{E}}_{G,i} = \sum_{\gamma \in \Psi^{\mathcal{E}}_{G,i}} a_{\ell(\gamma)}.
\]

If an agent $i$ plays a best response, she cannot be sponsoring a link that is already sponsored by another agent; i.e. $j \in s_i \implies i \notin s_j$. In addition, two properties must hold regarding the marginal cost $c$:
\begin{enumerate}
    \item[(i)] \textbf{Retention:} For any subset of agents $T \subseteq s_i$ that $i$ sponsors links to, the benefit must exceed the cost: $b^{E_i}_{G,i} \ge |E_i|c$, where $E_i$ are the edges connecting $i$ to agents in $T$.
    \item[(ii)] \textbf{No Deviation:} For any other set of agents $T\subseteq N\setminus(\{i\}\cup s_i)$ that $i$ does \emph{not} sponsor links to, adding them must not be profitable: $b^{E_i}_{G',i} \le |E_i|c$, where $G'$ is the graph with the added edges.
\end{enumerate}

We compare these benefits using a structural mapping argument. Loosely speaking, Lemma~\ref{lemma:deviation} shows that if agent $i$ can mimic the connectivity of a set of other agents $j_1,\dots,j_k$ in a way that ``covers'' their neighbors, then she derives strictly more benefit than they do from the relevant connections. This implies that either agent $i$ has a profitable deviation in the form of mimicking these agents' connectivity, or at least one of the other agents has a profitable deviation in the form of reducing their connectivity.

To formalize this deviation argument, we first establish a condition for when one set of edges yields higher centrality than another.

Let $G=(N,E)$ be a graph and $i \in N$ an agent. Consider any collection of distinct agents $j_1, \dots, j_k \in N \setminus \{i\}$ and associated edge sets $E_1, \dots, E_k$ satisfying the following four conditions:
\begin{enumerate}
    \item $E_l \subseteq E$ and $j_l \in e$ for every $e \in E_l$.
    \item $E_l \cap E_m = \varnothing$ for distinct $l, m$.
    \item  If $\{j_l, f\} \in E_l$, then $f \notin N_G(i) \cup \{i\}$.
    \item If $\{j_l, f\} \in E_l$ and $\{j_m, g\} \in E_m$ with $l \neq m$, then $f \neq g$.
\end{enumerate}
Informally, agent $i$ adds one link to each neighbor that any of the agents $j_1,\dots,j_k$ reaches through the designated edge sets $E_1,\dots,E_k$, thereby covering all of those connections with the same total number of new links. Formally, let $G'$ be the graph obtained by adding the set of edges $E_i=\{\{i,f\}:\exists l\text{ s.t. }\{j_l,f\}\in E_l\}$ to $G$. By Condition~4 these edges are distinct and $|E_i|=\sum_{l=1}^k|E_l|$. Let $\Psi=\bigcup_{l=1}^k\Psi^{E_l}_{G,j_l}$ denote the set of all walks that start at some $j_l$ and traverse at least one edge in $E_l$.

\begin{restatable}{lemma}{injectivelem}
\label{lemma:deviation}
If $N_{G'}(j_l)\setminus\{i\}\subseteq N_{G'}(i)\setminus\{j_l\}$ for all $l=1,\dots,k$, then there exists an injective, non-surjective mapping $F:\Psi\to\Psi^{E_i}_{G',i}$ such that $\ell(\gamma)\ge \ell(F(\gamma))$ for all $\gamma\in\Psi$.
\end{restatable}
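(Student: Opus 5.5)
The plan is to build $F$ explicitly, by rerouting each walk at the point where it first uses one of the designated edges. Take $\gamma=(v_0,\dots,v_T)\in\Psi^{E_l}_{G,j_l}$, so $v_0=j_l$. Let $t$ be the first index with $\{v_t,v_{t+1}\}\in E_l$. By Condition~1 this edge is $\{j_l,f\}$ for some $f$, and it is traversed either as $j_l\to f$ or as $f\to j_l$. Call $(v_0,\dots,v_{t+1})$ the \emph{prefix} and $(v_{t+1},\dots,v_T)$ the \emph{suffix}. The candidate map is
\[
F(\gamma)=(\pi(v_0),\pi(v_1),\dots,\pi(v_{t+1}),v_{t+2},\dots,v_T),
\]
where $\pi(j_l)=i$ and $\pi$ fixes every other vertex. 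In words, every visit to $j_l$ in the prefix is replaced by $i$, and the suffix is left untouched.

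\textbf{Validity and length.} First I would check that $F(\gamma)$ is a walk in $G'$. A prefix edge $\{j_l,x\}$ with $x\neq i$ becomes $\{i,x\}$. If $x=f$, this is an edge of $E_i\subseteq E(G')$. Otherwise $x\in N_{G'}(j_l)\setminus\{i\}\subseteq N_{G'}(i)$ by the hypothesis of the lemma. The junction edge at time $t+1$ is either $\{i,f\}$ or $\{f,i\}$, both in $E_i$, so $F(\gamma)$ traverses an edge of $E_i$ and lies in $\Psi^{E_i}_{G',i}$. Length is preserved exactly, so $\ell(F(\gamma))=\ell(\gamma)$.

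\textbf{Injectivity.} Given an image walk, I would read off its first edge belonging to $E_i$, say $\{i,f\}$. Condition~4 then identifies $l$ uniquely, because $f$ is covered by only one $E_l$. Condition~3 says $f\notin N_G(i)$, so every $E_i$ edge is new in $G'$. Undoing $\pi$ on the part up to that edge should then recover $\gamma$.

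\textbf{Main obstacle.} Two gaps have to be closed for this to work.
\begin{itemize}
\item The prefix of $\gamma$ may itself visit $i$. A prefix edge $\{j_l,i\}$ would become the loop $\{i,i\}$, and visits to $i$ are indistinguishable from rewritten visits to $j_l$.
\item The rewritten prefix may use an edge $\{i,f'\}$ with $f'$ covered by some $E_m$, $m\neq l$. Then the ``first $E_i$ edge'' of the image is not the junction edge, and $l$ is misidentified.
\end{itemize}
For the first gap, the first thing I would try is to truncate: if the prefix visits $i$, discard everything before the last such visit. This still yields a walk from $i$ and only shortens it, which the lemma allows ($\ell(\gamma)\ge\ell(F(\gamma))$). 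I would then restore injectivity by modifying the rewrite rule so that the truncated walks land in a disjoint part of the image, for instance walks whose first step is to $j_l$. For the second gap, I would decode using the last $E_i$ edge rather than the first, or define $t$ through a canonical order on $E_l$-traversals. This is the step I expect to require the most case analysis.

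\textbf{Non-surjectivity.} I would exhibit a walk outside the image. When $k\ge2$, a candidate is $(i,f_1,i,f_2)$ with $\{j_1,f_1\}\in E_1$ and $\{j_2,f_2\}\in E_2$, which combines covered neighbors of two different $j_l$'s in a way no single preimage produces. When $k=1$, I would use a walk starting at $i$ whose first step is to an original neighbor of $i$ not adjacent to $j_1$, or to $j_1$ itself, and then crosses an $E_i$ edge. Such walks are excluded by construction of the modified rule. The exact choice of this witness must be fixed together with how the truncation case is resolved.
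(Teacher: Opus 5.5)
\textbf{There is a genuine gap.} Your proposal is a plan rather than a proof. The map you write down fails to be a walk, or fails to be invertible, in exactly the cases you flag, and the repair you suggest for the first obstacle cannot work.

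\emph{Truncation is many-to-one.} Truncating at the last visit to $i$ throws away an arbitrary initial segment. No re-targeting of truncated walks into a ``disjoint part of the image'' can restore injectivity. For instance, if $\{i,j_l\}\in E$, the walks $(j_l,i,x,\dots)$ and $(j_l,i,j_l,i,x,\dots)$, with the same continuation and no later visit to $i$ before the designated edge, are both sent to $(i,x,\dots)$.

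\emph{An unchecked step at the junction.} When the first designated edge is traversed as $f\to j_l$, your image reads $(\dots,f,i,v_{t+2},\dots)$. This is a loop if $v_{t+2}=i$. If instead $v_{t+2}$ is a covered vertex (e.g.\ $v_{t+2}=f$), the image contains a further $E_i$-edge after the junction. So the junction is not simply the last $E_i$-edge either, which undercuts your proposed fix for the second obstacle.

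\emph{The witness fails.} Your map hits $(i,f_1,i,f_2)$ whenever $\{j_2,f_1\}\in E$: the walk $(j_2,f_1,j_2,f_2)$ is sent to it.

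\textbf{The missing idea is to relabel as little as possible.} The hypothesis is one-sided: every neighbour of $j_l$ other than $i$ is a neighbour of $i$ in $G'$, but not conversely. So you may replace $j_l$ by $i$, but not $i$ by $j_l$; even a global transposition fails. Since $\gamma$ is already a walk in $G\subseteq G'$, the paper changes only the two positions that must change: $v_0$ and the $j_l$-endpoint of the first designated edge. Every other vertex, including interior visits to $i$ and $j_l$, is copied verbatim.

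Conflicts arise only when $i$ sits next to a changed position in the walk. There the paper applies the transposition $\sigma=(i\ j_l)$ to the maximal block of consecutive vertices in $\{i,j_l\}$. A swapped $j_l$ at the block boundary is deleted when it is not adjacent to the vertex outside the block. This only shortens walks. It is inverted by locating the crossing and testing whether the boundary vertex is adjacent to $j_l$.

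The paper builds this map for a single $j$ and then glues over $l$. Different $l$ are separated via Condition~4. Non-surjectivity is witnessed by $(i,f,i)$: after an $E_i$-edge traversed as $(i,f)$, the image continues along edges of $G$, and $\{f,i\}\notin E$.

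Your second concern is nevertheless legitimate, and the paper's gluing step passes over it. The paper identifies the crossing as the first $E_i$-edge of the image (its appendix observation). That identification is automatic only when $E_l$ contains every edge from $j_l$ to a non-neighbour of $i$. With the designated subsets $E_l$, the first step $\{i,v_1\}$ may already lie in $E_i$, so a complete proof has to handle this case too.
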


\begin{proof}
See Appendix.
\end{proof}



\begin{proposition}\label{prop:profit}
Let $s$ be a strategy profile and $G = G(s)$ be the corresponding network. Suppose there exists an agent $i$, a target set of agents $T \subseteq N \setminus (N_G(i) \cup \{i\})$ to whom $i$ is not connected, and a reference set $K \subseteq N \setminus \{i\}$ such that the following hold:
\begin{enumerate}
    \item Every agent in $T$ has a connection sponsored by some agent in $K$:
    \begin{align*}
        \text{for every } j \in T, \, j \in s_l \text{ for some } l \in K
    \end{align*}
    \item Forming edges to the agents in $T$ makes $i$ more connected than every agent in $K$:
    \begin{align*}
        N_{G'}(l) \setminus \{i\} \subseteq N_{G'}(i) \setminus \{l\} \text{ for every } l \in K, \text{ where } G' = G(s_i \cup T, s_{-i})
    \end{align*}
\end{enumerate}
Then $s$ is not an equilibrium.
\end{proposition}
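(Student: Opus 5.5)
We argue by contradiction: assume $s$ is an equilibrium and combine Lemma~\ref{lemma:deviation} with the Retention and No Deviation conditions. For each $j\in T$, condition~1 gives some $l\in K$ with $j\in s_l$; fix one such sponsor $l(j)$. Relabel the agents of $K$ that are actually used as $j_1,\dots,j_k$, and set
\[
E_m=\bigl\{\{j_m,f\}: f\in T,\ l(f)=j_m\bigr\}.
\]
These sets play the role required in Lemma~\ref{lemma:deviation}. Each $E_m$ is a subset of $E$, and every edge in it contains $j_m$. The sets are pairwise disjoint: this needs care only when both endpoints lie in $K$, and the choice of a single designated sponsor per target settles it. An edge $\{j_m,f\}$ could lie in two sets only if $f=j_{m'}$ and $j_m=l'$; the designation of one edge per target $f$ still keeps the sets disjoint, and we check this explicitly. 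Condition~3 holds because $T\cap(N_G(i)\cup\{i\})=\varnothing$. Condition~4 holds because each target $f$ is assigned to exactly one $j_m$. The edge set $E_i$ of the lemma is then exactly $\{\{i,f\}:f\in T\}$, so $G'$ coincides with $G(s_i\cup T,s_{-i})$. Hypothesis~2 of the proposition is precisely the neighborhood condition of the lemma. We obtain an injective, non-surjective $F:\Psi\to\Psi^{E_i}_{G',i}$ that weakly shortens walks.

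Next we turn the injection into a strict inequality of benefits. Since $a$ is weakly decreasing and positive, $\ell(F(\gamma))\le\ell(\gamma)$ gives $a_{\ell(F(\gamma))}\ge a_{\ell(\gamma)}$. Summing, and adding a positive term $a_{\ell(\cdot)}>0$ for at least one walk outside the image, we get
\[
b^{E_i}_{G',i}>\sum_{\gamma\in\Psi}a_{\ell(\gamma)}.
\]
We must make sure the sums converge; the global convergence assumption on $\sum a_kA^k$ handles this. The right-hand side is at least $\sum_m b^{E_m}_{G,j_m}$, with equality if the sets $\Psi^{E_m}_{G,j_m}$ are disjoint. They are, because they consist of walks with distinct starting points $j_m$. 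Hence
\[
b^{E_i}_{G',i}>\sum_{m}b^{E_m}_{G,j_m}.
\]

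Now we apply the equilibrium conditions. Each $E_m$ consists of links sponsored by $j_m$, so Retention gives $b^{E_m}_{G,j_m}\ge |E_m|c$. Summing over $m$ and using $|E_i|=\sum_m|E_m|$ yields $b^{E_i}_{G',i}>|E_i|c$. This says that $i$ strictly gains by adding the links to $T$, contradicting No Deviation. Here $T\cap s_i=\varnothing$ holds because $i$ is not connected to $T$ at all. Therefore $s$ is not an equilibrium.

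The main obstacle is the bookkeeping in the first step: verifying that the designated edge sets satisfy conditions~1–4 of Lemma~\ref{lemma:deviation}, in particular disjointness when targets themselves belong to $K$. If this fails, we will instead pick $K'\subseteq K$ to be just the chosen sponsors and restrict each $E_m$ to the sponsored edges toward $T$; hypothesis~2 is inherited for subsets of $K$.
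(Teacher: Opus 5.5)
Your argument is correct and is essentially the paper's proof. The paper splits $T$ greedily among sponsors in $K$, taking a maximal cover at each step, whereas you fix an arbitrary sponsor $l(f)$ for each target; that maximality is never used, and the remaining steps (Lemma~\ref{lemma:deviation}, summing Retention over the $j_m$, contradicting No Deviation) are the same. One correction: choosing a single sponsor per target does not by itself make the $E_m$ disjoint, since if $l(a)=b$ and $l(b)=a$ the edge $\{a,b\}$ lies in both $E_a$ and $E_b$; disjointness instead follows from the equilibrium fact that no link is sponsored by both endpoints, which you also need for Retention to be valid.
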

In particular, we show that if all of the $j_l$ are best responding, then agent $i$ has a profitable deviation by adding the links $E_i$.

\begin{proof}
Assume by way of contradiction that $s$ is an equilibrium. We decompose the target set $T$ based on who currently sponsors the links.
We select agents $j_1, \dots, j_k$ from $K$ and identify disjoint subsets $T_1, \dots, T_k$ of $T$ such that $T_r \subseteq s_{j_r}$.
Specifically, choose $j_1 \in K$ such that $T_1 = s_{j_1} \cap T$ is maximal. If $T_1 \neq T$, choose $j_2$ to cover a maximal portion of the remainder $T \setminus T_1$, let $T_2=(s_{j_2}\cap T)\setminus T_1$ and proceed inductively until $\bigcup T_r = T$ where the union is over disjoint sets. Let $E_r = \{\{j_r, f\} : f \in T_r\}$.
This construction creates disjoint sets of edges $E_1, \dots, E_k$ satisfying the disjointness conditions of Lemma \ref{lemma:deviation}.

Consider the deviation where $i$ adds connections to all agents in $T$. Let $E_i$ be these new edges. By the premise of the proposition, $N_{G'}(l) \setminus \{i\} \subseteq N_{G'}(i) \setminus \{l\}$. Therefore, 
Lemma \ref{lemma:deviation} applies, yielding:
\begin{align}
 &\notag \sum_{r=1}^k b^{E_r}_{G,j_r}
=
\sum_{r=1}^k\sum_{\gamma\in\Psi^{E_r}_{G,j_r}}a_{\ell(\gamma)}
\leq\\
&\label{eq:deviation1}
\sum_{r=1}^k\sum_{\gamma\in\Psi^{E_r}_{G,j_r}}a_{\ell(F(\gamma))}
=
\sum_{\gamma'\in F(\Psi)}a_{\ell(\gamma')}
< \\ 
&\label{eq:deviation2} 
\sum_{\gamma'\in\Psi^{E_i}_{G',i}}a_{\ell(\gamma')}
=
b^{E_i}_{G',i}.
\end{align}
Inequality \eqref{eq:deviation1} follows since
$\ell(\gamma)\geq \ell(F(\gamma))$ by Lemma~\ref{lemma:deviation}
and the sequence $(a_t)$ is weakly decreasing. Inequality
\eqref{eq:deviation2} follows since $F$ is injective and non-surjective,
so $F(\Psi)\subsetneq \Psi^{E_i}_{G',i}$, and all weights are positive. Therefore,
\begin{equation}\label{eq:strict_gain}
    \sum_{r=1}^k b^{E_r}_{G,j_r} < b^{E_i}_{G',i}.
\end{equation}
Since $s$ is an equilibrium, the agents $j_r$ must find it profitable to maintain their links. Thus, by the Retention property, $b^{E_r}_{G,j_r} \ge |E_r|c$.
Summing over $r$:
\[
\sum_{r=1}^k |E_r|c \le \sum_{r=1}^k b^{E_r}_{G,j_r} < b^{E_i}_{G',i}.
\]
Since $|E_i| = \sum |E_r| = |T|$, we have $|E_i|c < b^{E_i}_{G',i}$.
This implies that for agent $i$, the benefit of adding links $T$ strictly exceeds the cost. This contradicts the No Deviation condition required for equilibrium.
\end{proof}

We next turn to the proof of Theorem \ref{theorem:main}. 
\begin{definition}
Consider the network formation game defined above and let $s=(s_1,\ldots,s_n)$ be a strategy profile. We call agent $j$ isolated if $j\not\in s_i$ for any $i\neq j$. That is $j$ is isolated if no agent $i$ sponsor a link to $j$.    
\end{definition}

\begin{proof}[\textbf{Proof of Theorem \ref{theorem:main}}]
~\\
\noindent\textbf{Case 1: No isolated agents.}
Suppose every agent receives at least one sponsored link. We claim that $G$ must be complete. Assume not. Then there exist an agent $i$ and a nonempty set of non-neighbors $T = N \setminus (N_G(i) \cup \{i\})$. Let $K = N \setminus \{i\}$.
If $i$ connects to $T$, she becomes connected to everyone (so $N_{G'}(i) = N \setminus \{i\}$). Consequently, for any $l \in K$, we have $N_{G'}(l) \setminus \{i\} \subseteq N_{G'}(i) \setminus \{l\}$.
Since there are no isolated agents, the conditions of Proposition \ref{prop:profit} are satisfied, implying that $s$ is not an equilibrium. Thus, if no agent is isolated, $G$ is complete (a core-periphery network with $C=N$).

\medskip
\noindent\textbf{Case 2: Isolated agents exist.}
Assume there is at least one isolated agent $j$ (that is, $s_l$ does not contain $j$ for any $l$). Let $C = s_j$. We show that $C$ acts as the core.\\
\emph{Step A: All agents are connected to $C$.}
If $C=\varnothing$, there is nothing to prove. Suppose $C\neq\varnothing$ and there is an agent $i$ who is not fully connected to $C$. Let
\[
T=C\setminus (N_G(i)\cup\{i\}).
\]
Then $T$ is nonempty. Let $K=\{j\}$. Since $C=s_j$, every agent in $T$ has a connection sponsored by $j$. If $i$ connects to $T$, then in the resulting graph $G'$ we have
\[
N_{G'}(j)\setminus\{i\}\subseteq N_{G'}(i)\setminus\{j\}.
\]
Thus the conditions of Proposition~\ref{prop:profit} are satisfied, implying that $s$ is not an equilibrium, a contradiction. Therefore,
\[
C\setminus\{i\}\subseteq N_G(i)
\]
for every agent $i$.\\
\emph{Step B: No peripheral edges.}
Assume next that an isolated node $j$ exists with respect to $s$. Let $C=s_j$ (the set $C$ may be empty). We claim that all agents $i$ are connected to $C$, that is, $C\setminus\{i\}\subseteq N_G(i)$. Otherwise let $T=C\setminus (s_i\cup\{i\})$
and $K=j$. Note that the conditions of Proposition \ref{prop:profit} are satisfied with respect to $K$ and $T$. Thus, $s$ is not an equilibrium. Therefore, all agents $i$ are connected to $C$.

Note that for the same reason as in Step A, every isolated agent $i$ satisfies
$C\setminus\{i\}\subseteq s_i.$

We claim that there are no other edges $\{j,l\}$ in $G$ for agents $j,l\in N\setminus C$. Assume otherwise. Consider the following deviation of an isolated agent $i$. Let
$$T=\{h\in N\setminus C:\exists l\in N\setminus C \text{ such that } h\in s_l\}.$$
Let $K=(N\setminus C)\setminus\{i\}$. We claim that the conditions of
Proposition~\ref{prop:profit} are satisfied with respect to $i$, $K$,
and $T$. By the definition of $T$, every agent in $T$ has a connection
sponsored by some agent in $K$. Moreover, after the deviation to
$s'_i=s_i\cup T$, agent $i$ is connected to $C$ and to every agent in
$N\setminus C$ that receives a sponsored link from some agent in
$N\setminus C$, so the neighborhood inclusion required by
Proposition~\ref{prop:profit} holds. Thus Proposition~\ref{prop:profit}
applies, so $s$ is not an equilibrium, a contradiction. Hence there are
no edges between two agents in $N\setminus C$. Since Step A showed that
every agent is connected to every agent in $C$, we conclude that
$\{i,j\}\in E$ if and only if either $i\in C$ or $j\in C$. Therefore
$C$ is the core set and $G$ is a core-periphery network.
\end{proof}
We get the following corollary from the proof of Theorem~\ref{theorem:main}.

\begin{corollary}
\label{cor:two-isolated-peripheral-agents}
In any non-complete equilibrium, at least two peripheral agents are isolated in the sponsorship sense.
\end{corollary}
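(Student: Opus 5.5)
Take an equilibrium $s$ whose network $G=G(s)$ is not complete. The Case~1 argument in the proof of Theorem~\ref{theorem:main} shows that some agent is isolated in the sponsorship sense. Otherwise Proposition~\ref{prop:profit}, with $K=N\setminus\{i\}$ and $T$ the non-neighbours of any agent $i$ with a non-neighbour, gives a profitable deviation. So fix an isolated agent $j$ and set $C=s_j$. By the proof of Theorem~\ref{theorem:main}, $G$ is core-periphery with core $C$. Since $G$ is not complete, $C\neq N$.

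First I will show that every isolated agent is peripheral. Any agent $h\in C=s_j$ has a link sponsored by $j$, so it is not isolated. In particular $j\notin C$, so $j$ is itself an isolated peripheral agent. It remains to produce a second one.

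Suppose, for contradiction, that $j$ is the only isolated agent. Then every $h\in (N\setminus C)\setminus\{j\}$ receives a sponsored link from someone. Two peripheral agents are never adjacent, so that sponsor must lie in $C$. I then distinguish two cases.

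\emph{Case (a): $N\setminus C=\{j\}$.} Then $|C|=n-1$, and the network is $C$ complete plus $j$ joined to all of $C$. This is the complete graph, contradicting non-completeness.

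\emph{Case (b): there is a peripheral $i\neq j$.} The plan is to apply Proposition~\ref{prop:profit} to agent $j$, with $T=(N\setminus C)\setminus\{j\}$ (the non-neighbours of $j$) and $K=C$. Condition~1 holds because each agent in $T$ has a link sponsored by some agent of $C$. For condition~2, after $j$ adds $T$ she is adjacent to all of $N\setminus\{j\}$. Hence $N_{G'}(l)\setminus\{j\}\subseteq N\setminus\{j,l\}= N_{G'}(j)\setminus\{l\}$ for every $l\in K$. Proposition~\ref{prop:profit} then says $s$ is not an equilibrium, a contradiction.

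This is essentially the Case~1 argument run with the single isolated agent $j$ as the deviator. I therefore expect no real obstacle. The one point needing care is that $j$ is excluded from $K$, since the proposition requires $K\subseteq N\setminus\{i\}$ for the deviator $i$; this holds because $j\notin C$. It is harmless that $j$ itself receives no sponsored link, because $j$ does not belong to $T$. Combining the two cases, there are at least two isolated agents, and all of them lie in $N\setminus C$.
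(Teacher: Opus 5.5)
Your proof is correct and follows the paper's argument: assuming a unique isolated peripheral agent, you let her link to all other peripheral agents and invoke Proposition~\ref{prop:profit} with $K$ the whole core, whereas the paper takes $K$ to be only the core agents who sponsor agents in $T$, which makes no difference. You are slightly more careful than the paper: you check explicitly that isolated agents are peripheral and that a single peripheral agent would make the network complete, and the empty-core case, which the paper treats separately, is absorbed by your step that every sponsor must lie in $C$.
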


\begin{proof}
By Case 1 in the proof of Theorem~\ref{theorem:main}, any non-complete equilibrium has at least one isolated agent. Let \(C\) be the core identified in the proof of Theorem~\ref{theorem:main}, and let \(P=N\setminus C\). If \(C=\varnothing\), then the graph is empty and every agent is isolated, so the claim is immediate. Suppose \(C\neq\varnothing\).

Assume, toward a contradiction, that there is exactly one isolated peripheral agent, denoted by \(i\). Then every other peripheral agent is sponsored by some core agent. Let \(T=P\setminus\{i\}\), and let \(K\subseteq C\) be the set of core agents who sponsor agents in \(T\). If \(i\) adds links to all agents in \(T\), then \(i\) becomes connected to every agent in \(N\). Hence, in the resulting graph \(G'\), for every \(l\in K\),
\[
N_{G'}(l)\setminus\{i\}\subseteq N_{G'}(i)\setminus\{l\}.
\]
The conditions of Proposition~\ref{prop:profit} are therefore satisfied, so the original profile cannot be an equilibrium. This contradiction proves the result.
\end{proof}

\begin{remark}
We conjecture that a stronger statement holds: in every non-complete equilibrium, all core-periphery links are sponsored by the peripheral endpoint. Equivalently, every peripheral agent is isolated in the sponsorship sense.
\end{remark}

\section{Efficiency}\label{sec:efficiency}
In addition to determining which network structures can be supported in equilibrium, a central object in the study of network formation models is determining which networks are efficient, in the sense that they maximize social welfare. In our model, the answer turns out to be particularly clean. 
\begin{theorem}\label{theorem:efficiecy}
In any centrality model, the network that maximizes social welfare is either the empty network or the complete network.
\end{theorem}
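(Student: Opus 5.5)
The plan is to reduce welfare maximization to a one-dimensional comparison in the number of edges. First, in any welfare-maximizing profile no link is sponsored by both endpoints, since dropping the duplicate sponsorship lowers cost without changing the network. So we may write welfare as a function of the network alone:
\[
W(G)=\sum_{i\in N} b_i(A)-c\,|E|=\sum_{k\ge 1} a_k\,\mathbf{1}^\top A^k\mathbf{1}-c\,|E|.
\]
Let $B(G)=\sum_{k\ge1}a_k\mathbf{1}^\top A^k\mathbf{1}$ denote total benefit, and let $K_n$ denote the complete network, with $M=\binom{n}{2}$ edges.

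The key step is to show that benefit per edge is maximized by the complete network: for every nonempty $G$,
\[
\frac{B(G)}{|E|}\le\frac{B(K_n)}{M}.
\]
I will prove this term by term in $k$. The quantity $\mathbf{1}^\top A^k\mathbf{1}$ is the number of walks of length $k$ in $G$. Each such walk is determined by its first step, which is an oriented edge ($2|E|$ choices), followed by $k-1$ further steps. Each further step goes to a neighbor of the current vertex, so it has at most $n-1$ choices because every degree is at most $n-1$. Hence
\[
\mathbf{1}^\top A^k\mathbf{1}\le 2|E|(n-1)^{k-1}.
\]
In $K_n$ this bound holds with equality: $\mathbf{1}^\top A^k\mathbf{1}=n(n-1)^k=2M(n-1)^{k-1}$. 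Multiplying by $a_k>0$ and summing over $k$ (the series converge by the standing assumption) gives the displayed per-edge inequality. Note that this step uses only positivity of the weights, not their monotonicity.

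Given the per-edge bound, I will conclude by linearity. Set $\beta=B(K_n)/M$. For any nonempty $G$,
\[
W(G)=|E|\left(\frac{B(G)}{|E|}-c\right)\le |E|(\beta-c).
\]
The right-hand side is linear in $|E|\in[0,M]$, so it is at most $\max\{0,\,M(\beta-c)\}=\max\{W(\varnothing),W(K_n)\}$. Therefore the welfare maximum is attained by the empty network or by the complete network.

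I do not expect a serious obstacle. The only points needing care are the walk-counting bound, which is elementary, and the preliminary reduction ruling out double sponsorship in an optimum. If strict uniqueness of the maximizer is wanted, one additionally notes that the walk bound is strict for $k\ge2$ whenever $G$ is nonempty and not complete: some walk reaches a vertex of degree less than $n-1$ before its last step, so that step has fewer than $n-1$ choices. Thus every intermediate network is strictly worse than the better of the two extremes.
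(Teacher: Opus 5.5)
Your proof is correct and takes a genuinely different route from the paper's.

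The paper argues locally. A welfare maximizer is deletion-stable. Proposition~\ref{props:aux2} then builds a length-nonincreasing, injective, non-surjective map from walks that use $j$'s links to non-neighbors of $i$ into walks that use the copied links at $i$. This shows that in a deletion-stable network, letting $i$ copy $j$'s neighborhood strictly raises welfare, and invariance under all such copies forces the empty or complete network.

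You replace this with one global counting bound, $\mathbf{1}^\top A^k\mathbf{1}\le 2|E|(n-1)^{k-1}$ with equality at $K_n$. This caps welfare by the linear function $|E|(\beta-c)$ of the edge count, so the optimum sits at an endpoint. Your route buys three things:
\begin{itemize}
\item It uses only positivity of the weights, whereas the paper needs $a_1\ge a_2\ge\cdots$ to trade shorter image walks for larger weights. So your argument covers a strictly larger class of centralities.
\item It gives an explicit threshold: the complete network is efficient if and only if $c\le\beta=2\sum_{k\ge1}a_k(n-1)^{k-1}$.
\item It avoids the appendix's walk surgery, which is delicate as written. For a common neighbor $r$ of $i$ and $j$ with $\{i,j\}\in E$ and $f\in T_j$, the map defined there for walks starting at $r$ sends both $(r,j,i,j,f)$ and $(r,i,j,f)$ to $(r,j,i,f)$, so it is not injective.
\end{itemize}
What the paper's approach offers in return is a local statement: from any deletion-stable non-extreme network, it exhibits a specific welfare-improving addition, obtained by the same mechanism as the equilibrium result.

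Your strictness remark is what delivers the theorem in the form ``every maximizer is extreme'' rather than only ``the maximum is attained at an extreme.'' It rests on a one-line fact you should state: a nonempty non-complete graph has an edge ending at a vertex of degree at most $n-2$. Indeed, if every non-isolated vertex had degree $n-1$, no vertex would be isolated and the graph would be complete.
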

One helpful observation is that given a network $G$ with adjacency matrix $A$, the social welfare that it induces is given by
\[
W_G=\sum_{\gamma}a_{\ell(\gamma)}-|E|c=\sum_{k=1}^{\infty}a_k\mathbf{1}^t A^k\mathbf{1}-|E|c,
\]
where the first summation is over all walks $\gamma$ in $G$.

We prove Theorem \ref{theorem:efficiecy} by showing that if (i) $G$ is socially beneficial ($W_G > 0$), (ii) the removal of any set of links in $G$ reduces social welfare, and (iii) there are two agents $i,j$ such that $j$ is connected to at least one agent that $i$ is not connected to, then the social welfare can be increased by connecting $i$ to every agent that is connected to $j$. As we show, this implies that if the social welfare is strictly positive for any nonempty network $G$ (i.e. if the empty network is inefficient), then it is maximized when $G$ is the complete network.

To state the result we will use for the first part of the proof, some additional notation is required. For a subset $\mathcal{E}\subset E$ of edges in $G$, we denote by $\Psi^{\mathcal{E}}_G$ the set of all finite walks in $G$ that use at least one edge in $\mathcal{E}$. We say the welfare associated with $\mathcal{E}$ in $G$ is
\[b^{\mathcal{E}}_G=\sum_{\gamma\in\Psi^{\mathcal{E}}_G}a_{\ell(\gamma)}.\]

We will say that $G = (N, E)$ is deletion-stable if the removal of any links reduces social welfare; that is, if for every subset $\mathcal{E} \subseteq E$ of links in $G$, $W_G' \leq W_G$, where $G' = (N, E \setminus \mathcal{E})$ is the network formed by removing the links in $\mathcal{E}$. Clearly, any network which maximizes the social welfare must be deletion-stable.

Denote by $G^{i,j}$ be the network formed by adding a link between $i$ and every agent $k$ which is connected to $j$ but not $i$.

\begin{restatable}{proposition}{auxprop}
\label{props:aux2}
Suppose $G$ is deletion-stable. Then for any agents $i$ and $j$, if $G^{i, j} \neq G$ then $W_G < W_{G^{i, j}}$.
\end{restatable}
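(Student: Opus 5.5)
The plan is to mirror the argument behind Lemma~\ref{lemma:deviation} and Proposition~\ref{prop:profit}, now at the level of total welfare. Write $T=N_G(j)\setminus(N_G(i)\cup\{i\})$; since $G^{i,j}\neq G$, $T$ is nonempty. Let $E_j=\{\{j,f\}:f\in T\}\subseteq E$ and $E_i=\{\{i,f\}:f\in T\}$, the edges added to form $G^{i,j}$, so $|E_i|=|E_j|=|T|$. Then
\[
W_{G^{i,j}}-W_G=b^{E_i}_{G^{i,j}}-|T|c .
\]
Deletion-stability applied to the removal of $E_j$ gives $b^{E_j}_G\ge |T|c$, since removing $E_j$ destroys exactly the walks in $\Psi^{E_j}_G$ and saves $|T|c$. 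It therefore suffices to show $b^{E_j}_G<b^{E_i}_{G^{i,j}}$.

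For this I will build an injective, non-surjective, length-preserving map $F:\Psi^{E_j}_G\to\Psi^{E_i}_{G^{i,j}}$. In $G'=G^{i,j}$ every neighbour of $j$ other than $i$ is a neighbour of $i$, so $N_{G'}(j)\setminus\{i\}\subseteq N_{G'}(i)\setminus\{j\}$, which is exactly the covering hypothesis of Lemma~\ref{lemma:deviation}. The natural map works as follows. Take a walk $\gamma$ in $G$ using some edge of $E_j$ and locate the \emph{first} traversal of an edge $\{j,f\}$ with $f\in T$. Since welfare counts walks starting anywhere, I split at that traversal: the prefix ends (or begins) at $j$. I then replace the relevant visit of $j$ by $i$, either by relabelling the whole prefix via the swap $i\leftrightarrow j$ on vertices, or more simply by substituting $i$ for the occurrence of $j$ at that traversal. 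I keep the suffix unchanged.

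Adjacency is preserved because every edge $\{j,x\}$ with $x\neq i$ has a counterpart $\{i,x\}$ in $G'$. The resulting walk traverses $\{i,f\}\in E_i$, so it lies in $\Psi^{E_i}_{G'}$. Injectivity comes from recovering the cut point as the first traversal of an $E_i$ edge: before the cut, the image uses only edges of $G$, and the swap is involutive on those edges. The covering hypothesis is exactly what makes the swapped prefix valid in $G'$. The main obstacle is handling walks that pass through the edge $\{i,j\}$ or revisit $i$ or $j$ several times, so that decoding is unambiguous. My first attempt is the global vertex-transposition $\sigma=(i\,j)$ applied to the prefix up to the first $E_j$ edge; this avoids ad hoc case splits.

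For non-surjectivity, I will exhibit a walk in the image set that is missed. The one-edge walk $(i,f)$ with $f\in T$ would be the image of $(j,f)$, so that does not work. Instead I count both endpoint directions: welfare counts ordered walks, and the walk $(f,i,f)$, or a walk using two distinct edges of $E_i$ consecutively such as $(f,i,f')$, has a preimage only if the corresponding structure exists. If the direct inclusion fails to be strict, a cleaner fallback is to compare $b^{E_i\cup E_j}$ symmetrically. The simplest route, though, is to invoke Lemma~\ref{lemma:deviation} itself with $k=1$, $j_1=j$, $E_1=E_j$. Its conditions hold: Condition~3 holds because each $f\in T$ lies outside $N_G(i)\cup\{i\}$, and Condition~4 is vacuous. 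The lemma then directly yields an injective, non-surjective $F:\Psi^{E_j}_{G,j}\to\Psi^{E_i}_{G',i}$ with length non-increasing. This handles walks starting at $j$. For general walks, which may start elsewhere, I decompose each walk at the vertex where the first $E_j$ edge is used and apply the lemma's map to the reversed initial segment and to the forward segment, gluing the pieces. This gluing step is where I expect the real work to lie. Summing $a_{\ell}$ with weakly decreasing weights then gives $b^{E_j}_G<b^{E_i}_{G'}$, and hence $W_G<W_{G^{i,j}}$.
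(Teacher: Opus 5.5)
Your reduction matches the paper's and is correct. You have $W_{G^{i,j}}-W_G=b^{E_i}_{G^{i,j}}-|T|c$, and deletion-stability applied to $E_j$ gives $b^{E_j}_G\ge |T|c$, so everything rests on $b^{E_j}_G<b^{E_i}_{G^{i,j}}$. That inequality is the whole content of the proposition, and the proposal does not prove it. Lemma~\ref{lemma:deviation} only covers walks starting at $j$; for all other walks you list candidate maps and leave the ``gluing step'' open. As stated, none of the candidates works.

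\textbf{Global transposition.} Applying $\sigma=(i\,j)$ to the prefix does not preserve adjacency, because the covering relation is one-sided. In $G'=G^{i,j}$ every neighbor of $j$ other than $i$ is a neighbor of $i$, so relabelling $j\mapsto i$ is safe, but relabelling $i\mapsto j$ is not. A prefix step $\{i,x\}$ with $x\notin N_G(j)\cup\{j\}$ becomes the non-edge $\{j,x\}$. Likewise, a walk $(k,i,\ldots)$ with $k\neq i,j$ would need $\{k,j\}\in E$. This is why the construction behind Lemma~\ref{lemma:deviation} swaps only inside the maximal $\{i,j\}$-block next to the crossing and needs deletion clauses.

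\textbf{Gluing.} For a crossing traversed as $(j,f)$, the reversed initial segment $(j,v_{\tau-1},\ldots,v_0)$ contains no $E_j$-edge, so it lies outside the lemma's domain. Your gluing then reduces to replacing the crossing vertex $j$ by $i$. That map creates a step $i\to i$ on $(\ldots,i,j,f,\ldots)$, and likewise on $(\ldots,f,j,i,\ldots)$, which are exactly the walks you flagged.

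\textbf{Injectivity.} Combined with Lemma~\ref{lemma:deviation} on walks from $j$, the substitution map is not injective, since both send walks to walks starting at $i$. For $x\in N_G(i)\cap N_G(j)$, the lemma's map sends $(j,x,j,f)$ to $(i,x,i,f)$, and substitution sends $(i,x,j,f)$ to the same walk.

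\textbf{Non-surjectivity.} $(f,i,f)$ is not a witness, since it is the image of $(f,j,f)$. The paper uses $(i,f,i)$.

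The organizing idea you are missing is the paper's partition of $\Psi^{E_j}_G$ by starting vertex, with the classes sent into pairwise disjoint starting classes. Walks from $k\neq i,j$ go to walks from $k$ by a local edit at the first $E_j$-crossing. Walks from $j$ go to walks from $i$ via Lemma~\ref{lemma:deviation}. Walks from $i$ go to walks from $j$ by reversing the prefix, $(i,v_1,\ldots,v_{\tau-1},j,f,\ldots)\mapsto(j,v_{\tau-1},\ldots,v_1,i,f,\ldots)$; this is valid, length-preserving, and removes the collision above. Injectivity of the glued map is then checked class by class. Since only the $j$-class lands on walks from $i$, the witness $(i,f,i)$ from Lemma~\ref{lemma:deviation} is missed by the whole map.

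Be aware that the paper's own local case analysis is incomplete at the very patterns you flagged:
\begin{itemize}
\item its Case~1 yields a step $i\to i$ on $(\ldots,f,j,i,\ldots)$;
\item its Case~2b sends both $(\ell,j,i,j,f)$ and $(\ell,i,j,f)$ to $(\ell,j,i,f)$ when $\ell\in N_G(i)\cap N_G(j)$;
\item its formula for walks from $i$ crossing as $(f,j)$ produces a sequence containing no $E_i$-edge.
\end{itemize}
So even with the start-vertex routing, closing this step needs a genuine block-wise construction in the spirit of Lemma~\ref{lemma:deviation}, not just the observations in your proposal.
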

That is, if $G$ has the property that its welfare cannot be improved by removing any subset of the links, then adding links between $i$ and all neighbors of $j$ (assuming $i$ is not already linked to all of $j$'s neighbors) strictly increases the welfare. We relegate the proof of the proposition to an appendix.




We next derive Theorem \ref{theorem:efficiecy} as a corollary of Proposition \ref{props:aux2}.
\begin{proof}[\textbf{Proof of Theorem \ref{theorem:efficiecy}}]
 Let $G$ be a network that maximizes social welfare. Then $G$ is deletion-stable, since $W_G \geq W_{G'}$ for all networks $G'$. By Proposition~\ref{props:aux2}, it follows that for every pair of agents $i, j$, $G^{i, j} = G^{j, i} = G$. The only networks that have this property are the empty network and the complete network.
 
 To see this, note that if an agent $i$ has at least one connection, say to $j$, then for every other agent $k$, $G^{k, j} = G$, so $k$ is connected to $i$. Thus, any agent which has at least one connection in $G$ is connected to all other agents. If there is at least one such agent, then all agents have at least one connection, so all agents must be connected to all other agents. Hence, if $G$ is not the empty network, then $G$ must be the complete network.

\end{proof}
\nocite{*}
\section{Supermodularity}\label{sec:supermodularity}
Supermodular games, also called games with strategic complementarities, are useful because they cover many applied models and because their monotonicity structure delivers clean existence results and sharp comparative statics. In particular, supermodularity implies that best responses move in the same direction, and this often yields extremal equilibria and well behaved adjustment dynamics. 

In this section, we show that the game is supermodular. We begin by fixing notation. Given a set $N$ of agents, a graph $G$ with $N$ as its vertices, and a finite sequence $p = (v_0, \dots, v_m)$ of agents, say that $p$ is a path in $G$, and write $p \in G$, if $v_i$ is connected to $v_{i+1}$ in $G$ for each $i = 0, \dots, m-1$.

A strategy $s_i$ for agent $i$ is a subset of the other agents $s_i \subseteq N \setminus \{i\}$. The network $G(s)$ is formed when agents use the strategy profile $s$: $i$ is connected to $j$ if either $j \in s_i$ or $i \in s_j$ (or both).

Denote by $P^i$ the set of finite sequences of walks that begin with agent $i$. Observe that the utility of agent $i$, defined in \eqref{Defpayoff} is given by
\begin{align*}
    u_i(s) = \sum_{\gamma \in P^i} \indic_{\gamma \in G(s)} a_{\ell(\gamma)} - c \cdot |s_i|.
\end{align*}

\begin{proposition}
\label{prop:supermodular-utility}
$u_i(s)$ is supermodular in $s_i$.
\end{proposition}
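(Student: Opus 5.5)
The proof is a direct decomposition of the utility into indicator functions of walks, each of which is supermodular in $s_i$. Fix $s_{-i}$ and view $u_i$ as a set function on the lattice of subsets $s_i\subseteq N\setminus\{i\}$ ordered by inclusion. We must show that for all $S,S'$,
\[
u_i(S\cup S')+u_i(S\cap S')\ge u_i(S)+u_i(S').
\]
Equivalently, the marginal value of adding a target $x$ is weakly increasing in the current set.

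First, the cost term $-c|s_i|$ is modular: $|S\cup S'|+|S\cap S'|=|S|+|S'|$. It can therefore be dropped.

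Since all weights $a_{\ell(\gamma)}$ are positive and supermodularity is preserved under nonnegative combinations, it suffices to show that each map $s_i\mapsto \indic_{\gamma\in G(s_i,s_{-i})}$ is supermodular. The convergence assumption ensures that the infinite sum is absolutely convergent, so we may pass the inequality through the series.

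Fix a walk $\gamma\in P^i$ and let $E(\gamma)$ be its set of distinct undirected edges. Split $E(\gamma)$ into two kinds of edges:
\begin{itemize}
\item edges already present via $s_{-i}$, i.e.\ $\{u,v\}$ with $v\in s_u$ or $u\in s_v$ for some sponsor other than $i$;
\item the remaining edges, which are all of the form $\{i,x\}$ and can only be supplied by $i$.
\end{itemize}
If some edge of the second kind does not contain $i$, then $\gamma\notin G(s)$ for every $s_i$, and the indicator is identically $0$. Otherwise let
\[
R_\gamma=\{x:\{i,x\}\in E(\gamma)\text{ is not supplied by }s_{-i}\}.
\]
Then $\indic_{\gamma\in G(s)}=\indic_{R_\gamma\subseteq s_i}$, an upper-set indicator. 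Such indicators are supermodular: $R\subseteq S\cup S'$ holds whenever $R\subseteq S$ or $R\subseteq S'$, and $R\subseteq S\cap S'$ holds exactly when both do. Checking the four cases (whether $R\subseteq S$ and whether $R\subseteq S'$) gives
\[
\indic_{R\subseteq S\cup S'}+\indic_{R\subseteq S\cap S'}\ge \indic_{R\subseteq S}+\indic_{R\subseteq S'}.
\]

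Summing with weights $a_{\ell(\gamma)}$ and adding back the modular cost yields the claim.

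The only delicate step is the reduction $\indic_{\gamma\in G(s)}=\indic_{R_\gamma\subseteq s_i}$. It relies on $i$'s choice affecting only edges incident to $i$, and on edges being undirected and shared regardless of sponsor. Walks that revisit edges or pass through $i$ several times cause no trouble, because only the set $E(\gamma)$ matters.
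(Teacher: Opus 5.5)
Your proof is correct and follows the paper's approach: write $u_i$ as a sum of walk indicators weighted by $a_{\ell(\gamma)}$, show each indicator is supermodular in $s_i$, and note that the cost $-c|s_i|$ is modular. The only difference is presentational. You verify supermodularity of each indicator directly, as the upper-set indicator $\indic_{R_\gamma\subseteq s_i}$. The paper instead proves it for edge sets (Lemma~\ref{lem:path-containment-supermodular}) and transfers it to strategies via $E_{X\cup Y}=E_X\cup E_Y$ and $E_{X\cap Y}=E_X\cap E_Y$.
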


To prove this, we first isolate the basic set theoretic step. Fix a walk $\gamma$. The next lemma shows that the indicator of whether $\gamma$ is contained in a graph is supermodular as a function of the edge set: containment in $E\cup E'$ plus containment in $E\cap E'$ dominates containment in $E$ and in $E'$ separately.

\begin{lemma}
\label{lem:path-containment-supermodular}
For any sets $E, E'$ and any walk $\gamma$,
\begin{align*}
    \indic_{\gamma \in G(E \cup E')} + \indic_{\gamma \in G(E \cap E')} \geq \indic_{\gamma \in G(E)} + \indic_{\gamma \in G(E')}.
\end{align*}
\end{lemma}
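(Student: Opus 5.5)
The argument rests on one observation: a walk $\gamma=(v_0,\dots,v_m)$ is contained in the graph with edge set $E$ exactly when every edge it traverses lies in $E$. Write $S(\gamma)=\bigl\{\{v_t,v_{t+1}\}: t=0,\dots,m-1\bigr\}$ for the set of edges used by $\gamma$. Then
\[
\indic_{\gamma\in G(E)}=\indic_{S(\gamma)\subseteq E}.
\]
So each of the four indicators in Lemma~\ref{lem:path-containment-supermodular} is the indicator of a subset relation. The inequality therefore reduces to a pure set-theoretic fact about the fixed finite set $S=S(\gamma)$. The trivial walk of length $0$, with $S=\varnothing$, gives $1+1\ge 1+1$ and can be checked separately.

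I would then split into cases according to the right-hand side.

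\emph{Case (a): both $S\subseteq E$ and $S\subseteq E'$.} Then $S\subseteq E\cap E'$, and also $S\subseteq E\subseteq E\cup E'$. The left side equals $2$, which matches the right side.

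\emph{Case (b): exactly one holds, say $S\subseteq E$.} Then $S\subseteq E\cup E'$, so the left side is at least $1$, which equals the right side.

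\emph{Case (c): neither holds.} The right side is $0$ and the left side is nonnegative.

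Since every term is $0$ or $1$, these three cases exhaust all possibilities and the inequality follows. Equivalently, the two indicator identities
\[
\indic_{S\subseteq E\cap E'}=\indic_{S\subseteq E}\,\indic_{S\subseteq E'},\qquad \indic_{S\subseteq E\cup E'}\ge \max\bigl(\indic_{S\subseteq E},\indic_{S\subseteq E'}\bigr)
\]
combine with $xy+\max(x,y)\ge x+y$ for $x,y\in\{0,1\}$.

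I do not expect a real obstacle. The only point needing care is notation: $G(E)$ here means the graph on $N$ with edge set $E$, not the strategy-induced graph $G(s)$. When the lemma is later applied to Proposition~\ref{prop:supermodular-utility}, one takes $E=E(s_i,s_{-i})$ and $E'=E(s_i',s_{-i})$. One then has to check that $E(s_i\cup s_i',s_{-i})=E\cup E'$ and $E(s_i\cap s_i',s_{-i})\supseteq E\cap E'$. The second is only an inclusion, because an edge $\{i,j\}$ may be sponsored by $j$ rather than by $i$, but inclusion is enough since containment of a walk is monotone in the edge set. Multiplying by $a_{\ell(\gamma)}>0$ and summing over $\gamma\in P^i$, the cost term $c|s_i|$ is modular, so it does not affect the inequality.
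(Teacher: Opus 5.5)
Your proof is correct and matches the paper's: both split on whether the right-hand side equals $0$, $1$, or $2$, and settle the case $2$ by observing that every edge of $\gamma$ lies in $E\cap E'$. One small point about your closing remark: $E(s_i\cap s_i',s_{-i})$ is in fact equal to $E\cap E'$, as the paper asserts. An edge sponsored by $j$ belongs to all three sets, and $(A\lor C)\land(B\lor C)=(A\land B)\lor C$. This does not affect your argument, since the inclusion you invoke is all that is needed.
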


\begin{proof}
Observe first that the only possible values $\indic_{\gamma \in G(E)} + \indic_{\gamma \in G(E')}$ can take are $0$, $1$, and $2$.

If $\indic_{\gamma \in G(E)} + \indic_{\gamma \in G(E')} = 0$, then this inequality holds trivially, since $\indic_{\gamma \in G(E \cup E')} + \indic_{\gamma \in G(E \cap E')} \geq 0$.

If $\indic_{\gamma \in G(E)} + \indic_{\gamma \in G(E')} = 1$, then either $\gamma \in G(E)$ or $\gamma \in G(E')$, and in either case $\gamma \in G(E \cup E')$, so 
\begin{align*}
    \indic_{\gamma \in G(E \cup E')} + \indic_{\gamma \in G(E \cap E')} \geq \indic_{\gamma \in G(E \cup E')} = 1.
\end{align*}

If $\indic_{\gamma \in G(E)} + \indic_{\gamma \in G(E')} = 2$, then $\gamma$ does not contain any edges from $E \setminus E'$, since then $\gamma \notin G(E')$, or $E' \setminus E$, since then $\gamma \notin G(E)$. It follows that $\gamma$ contains only edges from $E \cap E'$, so $\gamma \in G(E \cap E')$ and $\gamma \in G(E \cup E')$, and hence $\indic_{\gamma \in G(E \cup E')} + \indic_{\gamma \in G(E \cap E')} = 2$.
\end{proof}

The proof of supermodularity is now a straightforward application of the Lemma.

\begin{proof}[Proof of Proposition~\ref{prop:supermodular-utility}]
Fix a strategy $s_{-i}$ of the other agents, and for any $Z \subseteq N \setminus \{i\}$, denote by $E_Z$ the set of edges in $G(Z, s_{-i})$. Observe that for any $X, Y \subseteq N \setminus \{i\}$, $E_{X \cup Y} = E_X \cup E_Y$ and $E_{X \cap Y} = E_X \cap E_Y$, so by Lemma~\ref{lem:path-containment-supermodular}, for any $\gamma \in P^i$,
\begin{align*}
    \indic_{\gamma \in G(X \cup Y, s_{-i})} + \indic_{\gamma \in g(X \cap Y, s_{-i})} &= \indic_{\gamma \in G(E_{X \cup Y})} + \indic_{\gamma \in G(E_{X \cap Y})}\\ &\geq \indic_{\gamma \in G(E_X)} + \indic_{\gamma \in G(E_Y)}\\ &= \indic_{\gamma \in G(X, s_{-i})} + \indic_{\gamma \in G(Y, s_{-i})}.
\end{align*}
Thus, $\indic_{\gamma \in G(s_i, s_{-i})}$ is supermodular in $s_i$. Now, since $-c \cdot |s_i|$ is also supermodular in $s_i$, and since any positive linear combination of supermodular functions is supermodular, it follows that 
\begin{align*}
    u_i(s_i, s_{-i}) = \sum_{\gamma \in P^i} a_{\ell(\gamma)} \cdot \indic_{\gamma \in G(s_i, s_{-i})} + (-c \cdot |s_i|)
\end{align*}
is supermodular in $s_i$.
\end{proof}
The next proposition is a simple implication of supermodularity. It provides a useful benchmark: at least one of the two extreme networks, the empty network or the complete network, can be supported in equilibrium.
\begin{proposition}\label{prop:extreme-network}
In every network centrality game, either the empty network or the complete network is an equilibrium network.
\end{proposition}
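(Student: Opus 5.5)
Fix the cost $c$ and the weights, and let $K_n$ denote the complete network. The plan is a two-case argument based on whether a single link is worth sponsoring when everyone else plays ``link to all''. Consider the profile $\bar s$ in which the complete network forms with every edge sponsored by exactly one endpoint; for instance, $i$ sponsors all $j>i$. We need two payoff quantities. The first is the marginal benefit to an agent of keeping a single sponsored link in $K_n$, namely $b^{\{e\}}_{K_n,i}$ for an edge $e$ at $i$. The second is the marginal benefit of adding a single link in the empty network, namely $a_1+a_2+\cdots$ summed over walks on one edge starting at $i$. We use supermodularity (Proposition~\ref{prop:supermodular-utility}) to reduce equilibrium checks to these extreme marginal comparisons.

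\emph{Case 1: the empty network is an equilibrium.} In the empty profile, a deviation by $i$ to a set $T$ produces a star centered at $i$ on $T\cup\{i\}$, and the best response condition is $b_i(\text{star})\le c|T|$ for every $T$. If this holds for every $T$, we are done. So suppose it fails: some $T\neq\varnothing$ gives $b^{E_T}_{G',i}>c|T|$, where $E_T$ is the set of star edges.

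\emph{Case 2: otherwise, show $\bar s$ is an equilibrium.} In $\bar s$, an agent cannot add links because all links exist. Also, no agent sponsors a redundant link, since each edge has one sponsor. So it suffices to verify Retention: for every $i$ and every subset $S\subseteq\bar s_i$ of edges $i$ sponsors, $b^{S}_{K_n,i}\ge c|S|$. The key inequality to establish is
\[
b^{S}_{K_n,i}\;\ge\;b^{E_T}_{\mathrm{star},i}\cdot\frac{|S|}{|T|}\quad\text{in a form that suffices},
\]
and more precisely that the marginal value to $i$ of $m$ links in $K_n$ is at least $m/|T|$ times the value of the star on $T$. We plan to prove it in two steps. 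First, by supermodularity applied along $s_{-i}$ increasing from empty to $\bar s_{-i}$ (increasing differences in $(s_i,s_{-i})$, which follows from Lemma~\ref{lem:path-containment-supermodular} applied to edge sets, since the lemma holds for arbitrary edge sets), the marginal value of any set of $i$'s links in $K_n$ is at least its marginal value when $i$'s link set is that star and others are absent. Second, by symmetry, the average value per link of an $m$-star is increasing in $m$: walks in a star from the center map injectively into walks in a larger star, and by supermodularity in $s_i$ the marginal gain of each added leaf is nondecreasing. Hence in $K_n$ the per-link value of $S$ is at least the per-link value of the full star on $n-1$ leaves restricted appropriately, which is at least the per-link value of the star on $T$, which exceeds $c$.

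The main obstacle is this last comparison. Retention must hold for small $S$ (even $|S|=1$), while the failure of Case 1 only gives a per-link bound for some possibly large $T$. To handle this, we will first try to show directly that in $K_n$ the marginal value of a \emph{single} edge $\{i,j\}$ is at least the average per-link value of the star on $T$. The plan is an injection argument in the style of Lemma~\ref{lemma:deviation}: walks from $i$ in the $|T|$-star that use the edges map, with no increase in length, into walks in $K_n$ using $\{i,j\}$, scaled by $|T|$. This is done by relabeling leaves via permutations fixing $i$ and sending the first-used leaf to $j$. Averaging over the $|T|$ choices of first leaf then gives $|T|\,b^{\{i,j\}}_{K_n,i}\ge b^{E_T}_{\mathrm{star},i}>c|T|$. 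The bound for general $S$ then follows from supermodularity in $s_i$, since marginal values of successive links are at least the value of the first.
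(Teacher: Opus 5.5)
Your relabeling injection is fine: it does give $b^{\{i,j\}}_{K_n,i}\ge b^{E_T}_{\mathrm{star},i}/|T|>c$. The gap is the last step, going from a single edge to a general $S\subseteq\bar s_i$. That step runs the wrong way.

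Write $E_S=\{\{i,j\}:j\in S\}$ and add the links of $S$ one at a time to $\bar s_i\setminus S$. Supermodularity makes the increments nondecreasing, so the single-edge value in $K_n$ is the \emph{last} and largest increment. The first increment is the value of a link in $K_n$ with the other $|S|-1$ edges of $E_S$ missing. Your injection does not control it: once $i$'s degree drops below $|T|$, the relabeled $|T|$-star no longer embeds at $i$.

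In fact $\Psi^{E_S}_{K_n,i}=\bigcup_{j\in S}\Psi^{\{i,j\}}_{K_n,i}$, so $b^{E_S}_{K_n,i}\le\sum_{j\in S}b^{\{i,j\}}_{K_n,i}$. The inequality is strict for $|S|\ge 2$, since walks such as $(i,j_1,i,j_2)$ are counted twice. So single-edge retention can never deliver retention for larger $S$. This is exactly why the binding deviation in Proposition~\ref{prop:complete-feasibility} deletes $h$ links rather than one.

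Your first plan already contains the repair, and the ``obstacle'' you flagged is not one. The condition is that you compare with the \emph{full} star on $N\setminus\{i\}$, not with the other agents' links removed.

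Let $f(m)$ be the center's benefit in an $m$-leaf star, so $b^{E_T}_{\mathrm{star},i}=f(|T|)$. Apply Lemma~\ref{lem:path-containment-supermodular}, summed over walks from $i$, and lower the base from $E(K_n)\setminus E_S$ to $\{\{i,l\}:l\notin S\cup\{i\}\}$. This gives $b^{E_S}_{K_n,i}\ge f(n-1)-f(n-1-|S|)$. If instead you remove the links that others sponsor, the comparison star has only $|\bar s_i|$ leaves, possibly one, and the resulting bound can fall below $c|S|$.

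By supermodularity in $s_i$ and symmetry, $f$ has nondecreasing increments, and $f(0)=0$. Hence
$f(n-1)-f(n-1-|S|)\ge |S|\,f(n-1)/(n-1)\ge |S|\,f(|T|)/|T|>c|S|$.
A small $S$ is therefore the easy case: removing it costs the top increments, which are the largest.

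This is essentially the paper's proof. The paper gets retention for every subset of the full star by showing, via a maximal best response and symmetry, that sponsoring all of $N\setminus\{i\}$ is a best response to the empty profile. It then transfers this to $K_n$ by increasing differences, lowering $s_{-i}$ to the profile in which each $j$ keeps only the link to $i$ that it sponsors, if any.
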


\begin{proof}
Let $s$ be any profile such that for every $i, j$, either $i \in s_j$ or $j \in s_i$ but not both; that is, $s$ is a profile for which $G(s)$ is the complete network and every edge is sponsored by exactly one agent. Let $s^\varnothing$ be the profile in which every agent plays $\varnothing$; this is the unique profile for which $G(s^\varnothing)$ is the empty network. We will show that either $s$ is an equilibrium or $s^\varnothing$ is an equilibrium.

To this end, suppose that $s^\varnothing$ is not an equilibrium. We begin by showing that $s_i = N\setminus\{i\}$ must be a best response to $s_{-i}^{\varnothing}$. Let $S \subseteq N \setminus \{i\}$ be any best response for $i$ that is maximal with respect to inclusion. Suppose $S \neq N \setminus \{i\}$. Since $s^\varnothing$ is not an equilibrium, $S \neq \varnothing$. Fix some $k \in S$ and $j \notin S \cup \{i\}$. Then by supermodularity,
\begin{align*}
    u_i(S \cup \{j\}, s_{-i}^\varnothing) - u_i(S, s_{-i}^\varnothing) 
    \geq 
    u_i((S \setminus \{k\}) \cup \{j\}, s_{-i}^\varnothing) 
    - u_i(S \setminus \{k\}, s_{-i}^\varnothing).
\end{align*}
By symmetry and the fact that $|(S \setminus \{k\}) \cup \{j\}| = |S|$, $u_i((S \setminus \{k\}) \cup \{j\}, s_{-i}^\varnothing) = u_i(S, s_{-i}^\varnothing)$, and since $S$ is a best response, $(S \setminus \{k\}) \cup \{j\}$ is also a best response, so the right hand side is nonnegative. But then it follows that $u_i(S \cup \{j\}, s_{-i}^\varnothing) \geq u_i(S, s_{-i}^\varnothing)$, so $S \cup \{j\}$ is a best response which contains $S$ as a strict subset, contradicting the maximality of $S$. It follows that $s_i = N \setminus \{i\}$ is a best response to $s_{-i}^\varnothing$.

We will now use this to show that $s$ must be an equilibrium. Let $s_j' = \{i\}$ for $j \neq i$. It follows from the proof of Proposition~\ref{prop:supermodular-utility} that $u_i$ is supermodular in $s_j$ for any $j$, so for any $S \subseteq s_i$,
\begin{align*}
    u_i(s_i, s_{-i}) - u(S, s_{-i}) &\geq u_i(s_i, s_{-i}') - u_i(S, s_{-i}')\\ &= u_i(N \setminus i, s_{-i}^\varnothing) - u_i(N \setminus (s_i \cup i) \cup S, s_{-i}^\varnothing) \geq 0.
\end{align*}
Moreover, since $u_i(T, s_{-i}) \leq u_i(T \cap s_i, s_{-i})$ for any $T \subseteq N \setminus i$, it follows that $u_i(s_i, s_{-i}) \geq u_i(T, s_{-i})$, so $s_i$ is a best response to $s_{-i}$. Since $i$ was arbitrary, this holds for all agents, so $s$ is an equilibrium.

The following counterexample shows that, in contrast to Theorem \ref{theorem:main}, supermodularity of $u_i(s)$ in $s_i$ alone does not imply that all equilibria are core-periphery. Sponsoring a node is more profitable for nodes with incoming links than for nodes without incoming links. As a result, there may exist equilibria in which a node is sponsored by some nodes but not by all nodes, in contrast to the core-periphery structure.

\begin{example}
Suppose $n=4$ and let $u_i(s) = z_i^\alpha - c |s_i|$, where $z_i = \bigl|
\bigl\{ j \in N_G(i) : \deg(j) > 1 \bigr\} \bigr|$ is the number of neighbors of node $i$ who have more than one neighbor, and $\alpha \in (1,2)$. Since $\alpha \geq 1$, the utility function is supermodular in $s_i$ (note: $z_i$ and $c |s_i|$ are modular).

Consider a candidate equilibrium consisting of a `triangle' plus an isolated node: $s_1 = \left\{ 2 \right\}$, $s_2 = \left\{ 3 \right\}$, $s_3 = \left\{ 1 \right\}$, and $s_4 = \emptyset$. The corresponding payoffs are: $b_1^*=b_2^*=b_3^*=2^\alpha-c$, and $b_4^* = 0$. Let us show that this is a strict Nash equilibrium for some $c$.

Observe that node 1 (the same reasoning applies to nodes 2 and 3) does not gain from sponsoring a link to node 4, because node 4 would have only one neighbor. The other possible deviation is to cut the link to node 2. The resulting payoff is $b_1^{dev}=1$, and thus the no-deviation condition is $c<2^{\alpha}-1$.

The isolated node can deviate by sponsoring links to all other nodes (if this deviation is not beneficial, then sponsoring links to only one or two nodes is also not beneficial). The resulting payoff is $b_4^{dev}=3^\alpha-3 c$, and thus the no-deviation condition is $c>3^{\alpha-1}$.

All no-deviation conditions are strictly satisfied if and only if $c \in\left( 3^{\alpha-1} , 2^{\alpha}-1 \right)$. This interval is nonempty if and only if $\alpha \in (1,2)$.
\end{example}
\end{proof}
\section{Feasibility of Core Sizes}
\label{sec:core-feasibility}

Theorem~\ref{theorem:main} shows that every equilibrium network is a core-periphery network. A natural next question is which core sizes can be supported in equilibrium. In this section, we give a characterization in terms of two simple deviations: a peripheral agent deleting all links to the core, and a peripheral agent adding links to all other peripheral agents.

Fix a set $C\subseteq N$ with $|C|=k$, and let $P=N\setminus C$. Let $G_k$ denote the core-periphery graph with core $C$: agents in $C$ are connected to all other agents, agents in $P$ are connected to all agents in $C$, and there are no edges between agents in $P$. We allow $k=0$, in which case $G_k$ is the empty graph. We use the convention that the complete graph is treated as the case $k=n$; thus the non-complete cases are $0\leq k\leq n-2$.

For \(0\leq k\leq n-2\), consider the following canonical sponsorship profile that generates \(G_k\). Each peripheral agent sponsors her links to all core agents, and no core agent sponsors a link to a peripheral agent. Each edge between two core agents is sponsored by exactly one of these two core agents. Thus, every peripheral agent is isolated in the sponsorship sense: no other agent sponsors a link to her. For a peripheral agent \(p\in P\), let \(G_k^{-p}\) be the graph obtained from \(G_k\) by deleting all edges between \(p\) and \(C\), and let \(G_k^{+p}\) be the graph obtained from \(G_k\) by adding all edges between \(p\) and the agents in \(P\setminus\{p\}\).
\begin{proposition}
\label{prop:k-core-feasibility}
Fix $0\leq k\leq n-2$. A non-complete core-periphery equilibrium with core size $k$ exists if and only if, for a peripheral agent $p\in P$,
\[
b_p(G_k)-b_p(G_k^{-p})\geq kc
\]
and
\[
b_p(G_k^{+p})-b_p(G_k)\leq (n-k-1)c.
\]
\end{proposition}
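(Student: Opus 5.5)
The plan has two directions. \emph{Necessity} follows from the earlier structural results. \emph{Sufficiency} consists of checking that the canonical sponsorship profile is an equilibrium, using supermodularity (Proposition~\ref{prop:supermodular-utility}) to reduce every deviation to the two extreme ones in the statement.

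\textbf{Necessity.} Suppose $s$ is an equilibrium whose network is $G_k$ with $k\le n-2$. By Corollary~\ref{cor:two-isolated-peripheral-agents}, some peripheral agent $p$ is isolated in the sponsorship sense. Since $p$ is linked to every core agent and no one sponsors a link to $p$, we have $s_p=C$. Two deviations are then available to $p$:
\begin{itemize}
\item[(a)] Drop all of $s_p$. This yields $G_k^{-p}$ and saves $kc$, so the first inequality must hold.
\item[(b)] Add links to all of $P\setminus\{p\}$, none of which are currently present. This yields $G_k^{+p}$ at cost $(n-k-1)c$, so the second inequality must hold.
\end{itemize}
By symmetry of $G_k$ under permutations of $P$, the values $b_p$ do not depend on which peripheral agent is chosen, so the conditions hold for any $p\in P$.

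\textbf{Sufficiency.} Assume both inequalities and take the canonical profile. Consider first a peripheral agent $p$ with $s_p=C$. Any deviation $S$ can be written as $S=(C\setminus R)\cup A$ with $R\subseteq C$ and $A\subseteq P\setminus\{p\}$. I would bound the change in utility in two stages.
\begin{itemize}
\item[(1)] Show that adding all of $P\setminus\{p\}$ to $C$ is unprofitable compared to adding only $A$. By supermodularity of $u_p$ in $s_p$, the marginal value of the set $A^c=(P\setminus\{p\})\setminus A$ at $C\cup A$ is at least its marginal value at $(C\setminus R)\cup A$.
\item[(2)] Use the symmetry of $G_k$ within $P$ and within $C$. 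The map $m\mapsto b_p(\text{$p$ linked to $m$ peripheral agents})-mc$ is convex, by supermodularity together with symmetry, exactly as in the proof of Proposition~\ref{prop:extreme-network}. Convexity in $m$ means the best choice is $m=0$ or $m=n-k-1$, and the second inequality rules out $m=n-k-1$ being strictly better.
\item[(3)] Treat removals from $C$ the same way. The function $r\mapsto$ utility after dropping $r$ core links is concave in the kept count, which by supermodularity is convex in the dropped count. So only $r=0$ or $r=k$ need to be compared, and the first inequality handles $r=k$.
\end{itemize}

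The main obstacle is the joint deviation, which removes some core links and adds some peripheral links at once. To handle it I would use supermodularity once more. The gain from adding $A$ is larger when more core links are kept, so
\[
u(S)-u(C)\le \bigl[u(C\cup A)-u(C)\bigr]+\bigl[u(C\setminus R)-u(C)\bigr],
\]
and each bracket is $\le 0$ by the two one-dimensional arguments above. The symmetry-plus-convexity step needs care, because symmetry only makes utility depend on the counts $(|R|,|A|)$. I would prove discrete convexity in each count by pairing swaps, as in Proposition~\ref{prop:extreme-network}.

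\textbf{Core agents.} A core agent $q$ is already linked to everyone and sponsors only core–core links. I would show that dropping any subset of these links is unprofitable. The plan is to compare with the peripheral agent: $q$'s neighborhood contains $p$'s, so the walks through the deleted edges map injectively to walks of $p$ through its core edges of no greater length, in the spirit of Lemma~\ref{lemma:deviation}. The retention gain of $q$ then dominates that of $p$, which is at least the cost by the first inequality, again after reducing to the all-or-nothing case via supermodularity. This comparison is the step I expect to be most delicate.
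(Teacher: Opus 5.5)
Your proposal follows the paper's proof step for step. Necessity comes from Corollary~\ref{cor:two-isolated-peripheral-agents}: an isolated peripheral agent has $s_p=C$, and her two extreme deviations give the two inequalities. Sufficiency uses the canonical profile, supermodularity plus symmetry to reduce a peripheral agent's deviations to the all-or-nothing ones, and a walk comparison for core agents. Your split of a joint deviation, $u(S)-u(C)\le[u(C\cup A)-u(C)]+[u(C\setminus R)-u(C)]$, is correct and makes precise the paper's remark that the gain from peripheral links is larger when more core links are kept. Two small slips: step (1) is not needed, and in step (3) the payoff is \emph{convex}, not concave, in the number of kept links. Convexity is what puts the maximum at an endpoint, and it is unchanged under $r\mapsto k-r$.

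The gap is the core-agent step.
\begin{enumerate}
\item \emph{Direction.} As you phrase it, $q$'s walks through the deleted edges are mapped injectively to $p$'s walks of no greater length. That would give $b^{E_q}_{G_k,q}\le b^{E_p}_{G_k,p}$, the opposite of what you need. You need a length-nonincreasing injection $\Psi^{E_p}_{G_k,p}\to\Psi^{E_q}_{G_k,q}$.
\item \emph{The lemma does not apply.} Lemma~\ref{lemma:deviation} does not supply that injection. Its Condition~3 requires the endpoints in $D$ to be non-neighbors of $q$, but in $G_k$ the edges $E_q$ are present. Applying the lemma to $G_k-E_q$ gives only $b^{E_p}_{G_k-E_q,p}<b^{E_q}_{G_k,q}$. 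Supermodularity gives $b^{E_p}_{G_k-E_q,p}\le b^{E_p}_{G_k,p}$, so the two bounds do not chain.
\item \emph{The construction is not injective on $G_k$.} A walk of $p$ in $G_k$ can cross $E_q$ before its first $E_p$ edge. The first $E_q$ edge of the image then no longer locates the first $E_p$ edge of the preimage. For example, take $x\in D$ and $c\in C\setminus(D\cup\{q\})$. The walks $(p,c,q,x,q,c,p,x)$ and $(p,c,p,x,q,c,q,x)$ are both sent to $(q,c,q,x,q,c,q,x)$.
\end{enumerate}
The paper's proof passes over the same point by citing ``the same walk comparison'' from Proposition~\ref{prop:profit}. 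The inequality $b^{E_p}_{G_k,p}\le b^{E_q}_{G_k,q}$ does appear true: walk counts in a few small examples I checked satisfy it length by length. It still needs its own argument. A possible starting point is the transposition of $p$ and $q$. It is an automorphism of $G_k$ with the edges between $q$ and $P\setminus\{p\}$ removed, and it maps $E_p$ onto $E_q$. So it gives a length-preserving bijection between the walks on both sides that avoid those edges. What remains is a length-nonincreasing injection from $p$'s walks that use such an edge into $q$'s walks that use one.
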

The characterization above says that, for a fixed non-complete core-periphery architecture, equilibrium feasibility is governed by the incentives of a representative peripheral agent. In the canonical sponsorship profile, peripheral agents sponsor all links to the core and no links to other peripheral agents. Thus, a peripheral agent has two natural extreme deviations. She can drop all of her sponsored links to the core, thereby becoming disconnected from the core; or she can add all missing links to the other peripheral agents, thereby making herself connected to everyone. The first inequality rules out the deletion deviation, while the second rules out the addition deviation. Supermodularity then implies that if these two extreme deviations are not profitable, then no partial deletion or partial addition is profitable either. The remaining deviations of core agents are ruled out by comparing the value of core-core links to the corresponding links from a peripheral agent to the core.
\begin{proof}
We first prove necessity. Suppose that an equilibrium with core size $k$ exists, where $0\leq k\leq n-2$.  By Corollary~\ref {cor:two-isolated-peripheral-agents}, there exists an isolated agent in the sponsorship sense. By the proof of Theorem~\ref{theorem:main}, if $j$ is such an isolated agent, then $s_j$ is the core. Hence, in a core-periphery equilibrium with core $C$, an isolated peripheral agent sponsors all links to the core.

Since this isolated peripheral agent is best responding, she cannot profit by deleting all of her links to the core. This gives the first inequality. She also cannot profit by adding links to all other peripheral agents. This gives the second inequality.

We now prove sufficiency. Consider the canonical sponsorship profile described above. We show that no agent has a profitable deviation.

First consider a peripheral agent $p$. Her possible deviations consist of deleting some of her links to the core, adding some links to other peripheral agents, or doing both. By Proposition~\ref{prop:supermodular-utility}, her payoff is supermodular in the set of links she sponsors. Since all core agents are symmetric from the point of view of $p$, if deleting all links to the core is not profitable, then deleting only some of them is not profitable. Similarly, since all peripheral agents are symmetric from the point of view of $p$, if adding all missing peripheral links is not profitable, then adding only some of them is not profitable.

It remains to rule out deviations that both delete core links and add peripheral links. This follows from supermodularity as well. The gain from adding peripheral links is weakly larger when $p$ keeps more links to the core. Hence, if adding any set of peripheral links is not profitable when $p$ keeps all of her core links, it is also not profitable after she deletes some of her core links. Combining this with the previous paragraph shows that no peripheral agent has a profitable deviation.

Next consider a core agent. A core agent is already connected to every other agent, and therefore cannot profitably add links. Thus the only possible deviation is to delete some of the core-core links she sponsors. Suppose a core agent $i$ sponsors a set $E_i$ of core-core edges and that deleting these edges is profitable. Then $b^{E_i}_{G_k,i}<|E_i|c$.

Let $p$ be a peripheral agent, and let $E_p$ be the set of edges from $p$ to the same core agents that are reached from $i$ through the edges in $E_i$. By the same walk comparison used in the proof of Proposition~\ref{prop:profit}, in particular the comparison in equations \eqref{eq:deviation1}--\eqref{eq:deviation2}, the partial benefit of these links for the peripheral agent is no larger than the partial benefit of the corresponding sponsored links for the core agent:
\[
b^{E_p}_{G_k,p}\leq b^{E_i}_{G_k,i}.
\]
Hence $b^{E_p}_{G_k,p}<|E_p|c$, since $|E_p|=|E_i|$. Thus the peripheral agent $p$ would profitably delete the links $E_p$, contradicting the fact that no peripheral agent wants to delete any subset of links to the core. Therefore no core agent wants to delete any sponsored core-core links.

Therefore no agent has a profitable deviation, and the canonical sponsorship profile is a Nash equilibrium.
\end{proof}
The complete graph is the remaining boundary case. Let \(G_n\) be the complete graph. A complete sponsorship profile is a profile that generates \(G_n\) and in which every edge is sponsored by exactly one of its endpoints. For \(d\in\{0,\ldots,n-1\}\), let \(G_n^{-d}\) denote the graph obtained from \(G_n\) by deleting \(d\) links incident to a fixed agent. By symmetry, the value \(b_i(G_n)-b_i(G_n^{-d})\) depends only on \(d\), and not on the identity of the agent or on the particular \(d\) links deleted.

\begin{proposition}
\label{prop:complete-feasibility}
Let \(h=\lceil (n-1)/2\rceil\). The complete graph is an equilibrium network if and only if
\(b_i(G_n)-b_i(G_n^{-h})\ge hc\).
\end{proposition}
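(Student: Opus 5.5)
The plan is to prove the two directions separately, using supermodularity (Proposition~\ref{prop:supermodular-utility}) and the symmetry of the complete graph. The key observation is that in a complete sponsorship profile, agent \(i\)'s only nontrivial deviations are deletions of subsets of the links she sponsors. Adding links is impossible, since she is already connected to everyone. Sponsoring a link that is already sponsored by the other endpoint only costs \(c\) and changes nothing. By symmetry, deleting any \(d\) of her sponsored links changes her benefit by \(\Delta(d)\defeq b_i(G_n)-b_i(G_n^{-d})\). So agent \(i\), sponsoring \(m_i\) links, is best responding iff \(\Delta(d)\ge dc\) for all \(d\le m_i\).

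For \textbf{sufficiency}, I would first show that \(\Delta(d)-dc\) is suitably concave-like in the sense that \(\Delta(d)\ge dc\) at \(d=h\) implies \(\Delta(d)\ge dc\) for all \(d\le h\). Supermodularity of \(u_i\) in \(s_i\) implies that the marginal benefit of a sponsored link is weakly increasing in the set of other links kept. Hence the incremental losses \(\Delta(d)-\Delta(d-1)\) are weakly decreasing in \(d\): deleting the \(d\)-th link hurts less when fewer links remain. So \(\Delta\) is concave on \(\{0,\dots,n-1\}\) with \(\Delta(0)=0\). Concavity gives \(\Delta(d)/d\) weakly decreasing in \(d\), and therefore \(\Delta(d)\ge (d/h)\Delta(h)\ge dc\) for \(d\le h\).

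To complete sufficiency, I would choose a complete sponsorship profile in which every agent sponsors at most \(h\) links. This can be done with a round-robin orientation (a tournament with all out-degrees in \(\{\lfloor (n-1)/2\rfloor,\lceil (n-1)/2\rceil\}\), e.g.\ the rotational tournament, adjusted when \(n\) is even). Then no agent gains by deleting any subset of her links, and the profile is an equilibrium.

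For \textbf{necessity}, suppose the complete graph is supported by some equilibrium profile. Since double sponsorship is never a best response, each of the \(\binom n2\) edges is sponsored by exactly one endpoint. Hence the total number of sponsored links is \(n(n-1)/2\), so some agent sponsors at least \((n-1)/2\) links, i.e.\ at least \(h\) links since counts are integers. That agent must not gain from deleting any \(h\) of them, which gives \(\Delta(h)\ge hc\).

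The main obstacle I expect is the concavity step. It requires checking that the supermodularity statement, which is about sets of sponsored links, translates into decreasing increments of \(\Delta\) along nested deletion sets. This can be done by comparing \(\Delta(d)-\Delta(d-1)\) and \(\Delta(d+1)-\Delta(d)\) along a chain \(S\supset S\setminus\{k\}\supset S\setminus\{k,k'\}\), using symmetry to identify these with the values of \(\Delta\). A secondary, minor obstacle is exhibiting the balanced orientation for both parities of \(n\).
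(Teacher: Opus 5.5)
Your proposal is correct and follows essentially the same route as the paper: sufficiency by deriving concavity of $\Delta$ from supermodularity and symmetry and applying it to a balanced orientation in which every agent sponsors at most $h$ links, and necessity by noting that an equilibrium orients $K_n$ and using the pigeonhole count. The only difference is in the necessity direction, where you directly use the deviation that deletes exactly $h$ of the $d\ge h$ sponsored links, whereas the paper deletes all $d$ links and then uses the decreasing average loss to pass from $d$ to $h$; your version does not need concavity for that direction.
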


\begin{proof}
Suppose first that the complete graph is an equilibrium network. In any equilibrium profile generating \(G_n\), no edge is sponsored by both endpoints, since duplicate sponsorship changes no benefits and only adds cost. Hence the sponsorship profile orients the edges of the complete graph. Since the average number of sponsored links is \((n-1)/2\), some agent sponsors \(d\ge h\) links. This agent cannot profitably delete all links she sponsors, so \(b_i(G_n)-b_i(G_n^{-d})\ge dc\).

By the supermodularity argument in Proposition~\ref{prop:supermodular-utility}, and by symmetry of the complete graph, the marginal loss from deleting an additional incident link is weakly decreasing as more incident links are deleted. Hence the average loss per deleted link is weakly decreasing in \(d\). Since \(d\ge h\), the inequality \(b_i(G_n)-b_i(G_n^{-d})\ge dc\) implies \(b_i(G_n)-b_i(G_n^{-h})\ge hc\).

Conversely, suppose \(b_i(G_n)-b_i(G_n^{-h})\ge hc\). Consider a balanced complete sponsorship profile, in which every edge is sponsored by exactly one endpoint and every agent sponsors either \(\lfloor (n-1)/2\rfloor\) or \(h\) links. Such a profile exists by orienting the complete graph as evenly as possible.

Since the graph is complete, no agent can gain by adding links. It remains only to rule out deletions. Let agent \(i\) sponsor \(d_i\le h\) links. Since the average loss per deleted link is weakly decreasing in the number of deleted links, the condition for \(h\) links implies \(b_i(G_n)-b_i(G_n^{-d_i})\ge d_i c\). Thus agent \(i\) does not gain by deleting all links she sponsors. By supermodularity and symmetry, if deleting all sponsored links is not profitable, then deleting any subset of sponsored links is not profitable. Therefore no agent has a profitable deviation, and the complete graph is an equilibrium network.
\end{proof}

\subsection{Implication for Katz--Bonacich Centrality}

For Katz--Bonacich centrality, the feasibility conditions can be written explicitly. Let \(m=n-k\) be the number of peripheral agents, let \(p\) be a peripheral agent, and let \(r=m-1=n-k-1\). For \(0<k\leq n-2\), define
\[
B_k=\frac{\delta k(1+\delta m)}{1-\delta(k-1)-\delta^2km}
\quad\text{and}\quad
A_k=\frac{\delta(k+r)+\delta^2r(k+1)}{1-\delta k-\delta^2r(k+1)}.
\]
Here \(B_k=b_p(G_k)\), while \(A_k=b_p(G_k^{+p})\), the benefit obtained by \(p\) after adding links to all other peripheral agents. Since deleting all links to the core isolates \(p\), we have \(b_p(G_k^{-p})=0\). Therefore a non-complete \(k\)-core equilibrium exists if and only if
\[
\frac{A_k-B_k}{n-k-1}\leq c\leq \frac{B_k}{k}.
\]

For \(k=0\), the deletion condition is vacuous. The empty network is an equilibrium if and only if \(b_p(G_0^{+p})\leq (n-1)c\), where
\(b_p(G_0^{+p})=\delta(n-1)(1+\delta)/(1-\delta^2(n-1))\).

For \(k=n\), Proposition~\ref{prop:complete-feasibility} gives the complete-graph condition. In the Katz--Bonacich case,
\[
b_i(G_n)-b_i(G_n^{-h})
=
\frac{\delta(n-1)}{1-\delta(n-1)}
-
b_i(G_n^{-h}),
\]
where \(G_n^{-h}\) is the complete graph with \(h=\lceil(n-1)/2\rceil\) links incident to \(i\) deleted. Thus the complete graph is an equilibrium network if and only if
\(b_i(G_n)-b_i(G_n^{-h})\ge hc\).

\section{Cost structure of Core-Periphery equilibria with Katz centrality}
\label{sec:convexcost}
We next consider a convex cost instead of the linear cost assumed so far. Let $G_{m,k}$ be a graph with $m$ nodes in the core and $k$ nodes in the periphery. The core nodes form a complete graph (a clique), while each periphery node is connected to the core nodes only. 
We would like to show how to construct equilibria that support this non-trivial core-periphery network.

\subsection{The candidate equilibrium profile $s^*$} 
Consider the core-periphery strategy profile $s^*=(s^*_z)_{z=1}^{m+k}$ where all periphery nodes use the same strategy, $i\in P= \{m+1,\dots,m+k\}$: each sponsors $m$ connections to all the core nodes: $s^*_i=\{1,\dots,m\}$. If $m$ is odd, each core node, $j\in  \{1,\dots,m\}$, sponsors $\frac{m-1}2$ edges to the other core nodes: $s^*_j=\{(j+1)\!\mod{m},\dots,(j+\frac{m-1}2)\!\!\mod{m}\}$. 
All core nodes sponsor distinct edges. 
The subgraph consisting of the $m$ core nodes has $\frac{m(m-1)}2$ edges and so is a complete subgraph. 
If $m$ is even, let all  $j\in\{2,\dots,m\}$ create the $m-1$ complete graph as in the previous case. In addition,
let every  $j\in\{2,\dots,m\}$ sponsor its connection to node $1$, who sponsors none. So, in case $m$ is even, $s^*_1=\emptyset$, while $\abs{s^*_j}=\frac m2$, $j\in\{2,\dots,m\}$ and the core nodes $C=\{1,\dots,m\}$ form a complete subgraph. In this case node one has no profitable deviations as it can not delete or add edges.
\subsection{Necessary and sufficient conditions for the supporting cost function}

\begin{theorem}\label{Thm_mk}
Let the benefit be of Katz-Bonacich type,
\begin{align}
b(A)=\delta A(I-\delta A)^{-1}\mathbf{1}=\left(\sum_{k=1}^\infty \delta^k A^k\right)
 \mathbf{1},\quad \delta<\frac1{n-1}
\end{align} 

For any $m\ge 1$ and any $k=n-m>m$,  the core-periphery strategy $s^*$ is a Nash equilibrium of the network formation game with payoffs 
\begin{align}
u_i(s)=b_i(A(s))-Q(\abs{s_i}),\quad \forall i\in\{1,\dots,n\}
\end{align}
for any increasing weakly convex cost function $Q\colon\{1,\dots, n\}\to\real_+$ that satisfies the following conditions:
\begin{align}
&Q(0)=0\\
&\label{cup}Q(t)\le \Upsilon(t;m,k,\delta)
\quad \forall t\in\{1,\dots,m\}\\
&\label{cd}Q(t)\ge \Lambda(t;m,k,\delta)
\quad \forall t\in\{m+1,\dots,m+k-1\},
\end{align}
where $\Upsilon$ and $\Lambda$ are increasing and convex in $t$ and can be computed from $m,k,\delta$ where $\delta$ is in the radius of convergence of the adjacency matrix of the graph $G_{m,k}$ and its perturbations where one periphery agent severs up to $m$ edges to the core agents or adds up to $k-1$ edges to other periphery agents. 
\end{theorem}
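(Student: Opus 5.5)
The plan is to define $\Upsilon$ and $\Lambda$ explicitly from the deviation payoffs of a representative periphery agent, and then check every agent's best response under $s^*$.

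\textbf{Defining the bounds.} Fix a periphery agent $p$. For $t\in\{1,\dots,m\}$, let $\Upsilon(t)$ be chosen so that
\[
Q(m)-Q(m-t)\le b_p(G_{m,k})-b_p(G_{m,k}^{-t}).
\]
Here $G_{m,k}^{-t}$ is the graph in which $p$ severs $t$ links to the core. Because the core is symmetric, these benefits are Katz resolvent expressions, computable in closed form by an equitable-partition (quotient matrix) reduction. For $t\in\{m+1,\dots,m+k-1\}$, let $\Lambda(t)$ be chosen so that
\[
Q(t)-Q(m)\ge b_p(G_{m,k}^{+(t-m)})-b_p(G_{m,k}).
\]
Here $G_{m,k}^{+(t-m)}$ is the graph in which $p$ additionally links to $t-m$ periphery agents. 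By symmetry, only the number of added or dropped links matters. The radius-of-convergence hypothesis on $\delta$ is exactly what makes all these perturbed resolvents well defined.

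\textbf{Monotonicity and convexity of the bounds.} Increasingness follows because $b_p$ is monotone in the edge set. For convexity in $t$, I would use the supermodularity of Proposition~\ref{prop:supermodular-utility}. Marginal gains from adding periphery links are nondecreasing in the number already added, which makes $\Lambda$ convex. Marginal losses from deleting core links are nonincreasing as more links are deleted, which gives the corresponding convexity for $\Upsilon$ once it is rewritten in the form $Q(t)\le\Upsilon(t)$.

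\textbf{Equilibrium check for periphery agents.} Take a periphery agent with $|s_p|=m$. A mixed deviation deletes $d$ core links and adds $a$ periphery links. By supermodularity (as in the sufficiency part of Proposition~\ref{prop:k-core-feasibility}), the gain from adding periphery links is weakly smaller after deletions. The cost side is controlled by convexity of $Q$:
\[
Q(m-d+a)-Q(m-d)\ge Q(m+a)-Q(m) \quad\text{if } a\ge 0.
\]
So the pure-addition and pure-deletion conditions suffice. By symmetry, only the counts $d$ and $a$ matter, which reduces the check to the stated inequalities on $\{1,\dots,m\}$ and $\{m+1,\dots,m+k-1\}$.

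\textbf{Equilibrium check for core agents.} Core agents cannot add links, since they are already linked to everyone. A core agent $j$ sponsors $\lfloor m/2\rfloor$ or fewer core links and may delete some subset $E_j$. I would invoke the walk-injection comparison of Lemma~\ref{lemma:deviation}, as in Proposition~\ref{prop:k-core-feasibility}, to get
\[
b^{E_j}_{G,j}\ge b^{E_p}_{G,p}
\]
for the corresponding $|E_j|$ core links of a periphery agent $p$. The cost savings are
\[
Q(|s_j|)-Q(|s_j|-|E_j|)\le Q(m)-Q(m-|E_j|)
\]
by convexity, because $|s_j|\le m$. Hence the periphery non-deletion inequality dominates. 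Agent $1$ in the even case sponsors nothing and is trivially fine.

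\textbf{Main obstacle.} I expect the hardest step to be the explicit computation and the convexity verification for $\Upsilon$ and $\Lambda$. The recasting of the deletion constraint as an upper bound on $Q(t)$ itself, rather than on differences, is delicate. I would first try defining the bounds recursively via marginal differences and proving convexity directly from supermodularity. If that fails, I would fall back on explicit Katz formulas from the three-class quotient matrix (core, the deviating agent, other periphery).
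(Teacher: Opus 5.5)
Your plan has two genuine gaps. First, $\Upsilon$ and $\Lambda$ are never actually defined. ``Chosen so that $Q(m)-Q(m-t)\le\dots$'' is a two-point constraint on $Q$, not a function of $(t;m,k,\delta)$. It also cannot simply be rewritten as $Q(t)\le\Upsilon(t)$, because for fixed $Q(m)$ it is a \emph{lower} bound on $Q(m-t)$.

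Write $\Delta^-(q)=b_p(G_{m,k})-b_p(G^{-q}_{m,k})$ and $\Delta^+(a)=b_p(G^{+a}_{m,k})-b_p(G_{m,k})$. Test with $Q(t)=\varepsilon t$ for $t<m$, $Q(m)=\Upsilon(m)$, and $Q$ very steep above $m$: this shows any pointwise $\Upsilon$ that works must have $\Upsilon(m)\le\Delta^-(1)$. Likewise $\Lambda$ must absorb $Q(m)$, since what is needed is $Q(t)-Q(m)\ge\Delta^+(t-m)$.

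Copying the paper does not close this. The paper takes $\Upsilon(t)$ to be $p$'s benefit with $t$ core links and $\Lambda(t)=\Delta^+(t-m)$, both computed via Woodbury. These inequalities omit $Q(m)$; the first only says that $t$ core links beat none. They do not imply equilibrium. Take $m=1$, $k=3$, $\delta=0.1$: the cost $Q(t)=0.1t$ satisfies $Q(1)\le\Upsilon(1)\approx0.134$, $Q(2)\ge\Lambda(2)\approx0.129$ and $Q(3)\ge\Lambda(3)\approx0.261$. Yet a leaf gains $\Delta^+(1)\approx0.129>Q(2)-Q(1)$ by linking to another leaf.

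A choice that works is $\Upsilon(t)=\frac{t}{m}\Delta^-(1)$ and $\Lambda(t)=\Delta^-(1)+\Delta^+(t-m)$. Then $Q(m)-Q(m-q)\le Q(m)\le\Delta^-(1)\le\Delta^-(q)$ and $Q(m+a)-Q(m)\ge\Delta^+(a)$. Alternatively, state the hypotheses in difference form. Your supermodularity argument then gives convexity of $\Lambda$ directly. That is cleaner than the paper's sign analysis of the Woodbury expressions, which only shows that average losses and gains are monotone.

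Second, your mixed-deviation step uses convexity backwards. For convex $Q$, $x\mapsto Q(x+a)-Q(x)$ is nondecreasing, so $Q(m-d+a)-Q(m-d)\le Q(m+a)-Q(m)$. After deleting core links, adding periphery links is therefore both less valuable (by supermodularity) and cheaper, and nothing follows. The argument in Proposition~\ref{prop:k-core-feasibility} works only because cost is linear there. The paper's proof of this theorem checks only pure deletions and pure additions.

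What is missing is a swap comparison. At a fixed number of sponsored links, replacing a periphery target $x$ by a core target $c$ should weakly raise $b_p$, because apart from $p$ every neighbor of $x$ other than $c$ is a neighbor of $c$. This should follow from a walk injection like that of Lemma~\ref{lemma:deviation}. With it, deleting $d$ and adding $a$ links is dominated at equal cost by deleting $d-a$ links if $a\le d$, and by adding $a-d$ links if $a>d$.

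Your core-agent step coincides with the paper's, and there convexity is used correctly. Note, however, that Condition~3 of Lemma~\ref{lemma:deviation} forces $E_j$ out of the base graph. The lemma therefore yields only $b^{E_j}_{G,j}>b^{E_p}_{G^\circ,p}$, where $G^\circ$ is $G$ minus $E_j$. That is weaker than the inequality $b^{E_j}_{G,j}\ge b^{E_p}_{G,p}$ you need, so the injection must be redone with both edge sets present.
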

\begin{observation}
   \begin{enumerate} 
   \item If either one of the conditions $\eqref{cup}$ or $\eqref{cd}$ is violated, $s^*$ is not an equilibrium.
   \item The two bounds can be used to derive restrictions on parameters allowing the cost to be linear, as in our examples in Figure \ref{fig:moons}.
\end{enumerate}
\end{observation}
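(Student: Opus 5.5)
The Observation has two parts, and I would treat them separately. Both rest on one reading of how $\Upsilon$ and $\Lambda$ in Theorem~\ref{Thm_mk} are built: each bound is the no-deviation threshold of a single peripheral agent $p\in P$ under $s^*$. Since all core agents are symmetric from $p$'s point of view, and so are all other peripheral agents, a deviation by $p$ is determined up to symmetry by how many links she keeps to the core and how many she adds to the periphery. I would read $\Upsilon(t;m,k,\delta)$, for $t\le m$, as the largest value of $Q(t)$ compatible with $p$ not strictly preferring to sponsor fewer core links than her current $t$-link position. Likewise, $\Lambda(t;m,k,\delta)$, for $t=m+1,\dots,m+k-1$, is the smallest value of $Q(t)$ compatible with $p$ not strictly preferring to add $t-m$ links to other peripheral agents, i.e.\ not profiting from moving from $G_{m,k}$ to the graph in which she is linked to the core and to $t-m$ peripheral agents. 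I would first make this explicit by writing out the Katz--Bonacich benefits $b_p(\cdot)$ of the perturbed graphs (the perturbations listed in the statement). I would then check that (cup) at $t$ and (cd) at $t$ are algebraically equivalent to the corresponding no-deviation inequality, given $Q(0)=0$ and the already fixed values $Q(s)$ for $s<t$.

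With that identification, part 1 is direct. Suppose (cup) fails at some $t\le m$. Then the deviation of $p$ that realizes the threshold $\Upsilon(t)$ gives her strictly higher payoff than $s^*_p$. This is an explicit profitable deviation, so $s^*$ is not a Nash equilibrium. Suppose instead (cd) fails at some $t\ge m+1$. Then $p$ strictly gains by adding links to $t-m$ other peripheral agents, again contradicting equilibrium. No supermodularity is needed here. The argument uses only that every inequality in (cup) and (cd) is a single deviation comparison, not a consequence of several. Verifying this is where I expect the main work: I must make sure each bound was not obtained by combining deviations or by invoking convexity. If it was, a violation would only say that some deviation in a family is profitable, and I would have to exhibit which one. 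To handle that case, I would take the smallest $t$ at which the condition fails and use the monotonicity of $\Upsilon$ and $\Lambda$ to locate the profitable deviation.

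For part 2, I would substitute $Q(t)=ct$. The conditions become $ct\le \Upsilon(t)$ for $t\le m$ and $ct\ge\Lambda(t)$ for $m<t<m+k$, that is,
\[
\max_{m<t<m+k}\frac{\Lambda(t)}{t}\;\le\; c\;\le\;\min_{1\le t\le m}\frac{\Upsilon(t)}{t}.
\]
Because $\Upsilon$ and $\Lambda$ are convex with the normalization at $0$, the ratio $\Upsilon(t)/t$ is monotone in $t$, so the upper bound reduces to a single endpoint. The same applies to $\Lambda$, so the lower bound is attained at an endpoint. The result is one explicit interval for $c$ in terms of $(m,k,\delta)$. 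To connect this with Figure~\ref{fig:moons}, I would check that in the linear case the endpoints reduce to $B_k/k$ and $(A_k-B_k)/(n-k-1)$ from Section~\ref{sec:core-feasibility}, matching Proposition~\ref{prop:k-core-feasibility}. A nonempty interval then gives the parameter restriction, and for fixed core size it traces the moon-shaped region in the $(\delta,c)$ plane.
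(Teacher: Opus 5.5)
\textbf{There is a genuine gap.} Your plan rests on identifying each inequality in \eqref{cup} and \eqref{cd} with a single no-deviation constraint of a peripheral agent $p$ at $s^*$. That identification is false for \eqref{cup} with $t<m$.

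In the appendix, $\Upsilon(t)=(y_P-1)-(\Delta b(A_1^{m-t}))_p$ is simply $p$'s gross benefit when she keeps exactly $t$ of her $m$ core links; her benefit when isolated is $0$. So \eqref{cup} at $t$ says that keeping $t$ core links is no worse than keeping none. Under $s^*$, however, $p$ keeps $m$ links, not $t$. Every no-deviation constraint at $s^*$ involves a difference $Q(t)-Q(m)$, whereas \eqref{cup} and \eqref{cd} involve $Q(t)$ alone. The two coincide only for \eqref{cup} at $t=m$, where the comparison is with $t=0$ and $Q(0)=0$. The appendix's one-sentence motivation for \eqref{cup} glosses over the same point.

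Your fallback (take the smallest failing $t$ and use monotonicity) cannot close the gap, because for $m\ge 2$ part 1 is false for \eqref{cup}:
\begin{enumerate}
\item $\Upsilon$ is strictly convex with $\Upsilon(0)=0$, since the walk $(p,c_1,p,c_2)$ uses two core links. Hence $\Upsilon(1)<\Upsilon(m)/m$.
\item Take $Q(t)=ct$ with $c\in(\Upsilon(1),\Upsilon(m)/m]$ that also meets the addition condition of Proposition~\ref{prop:k-core-feasibility}. Then $s^*$ is an equilibrium by that proposition, yet \eqref{cup} fails at $t=1$.
\item Such $c$ exist. For $m=2$, $k=3$ and small $\delta$ one computes $\Upsilon(1)=\delta+4\delta^2+8\delta^3+O(\delta^4)$ and $\Upsilon(2)/2=\delta+4\delta^2+10\delta^3+O(\delta^4)$. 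The addition threshold is $\Lambda(4)/2=\delta+3\delta^2+O(\delta^3)$.
\end{enumerate}
What you can establish is \eqref{cup} at $t=m$ (deleting all core links) and \eqref{cd} at every $t$. Even for \eqref{cd} there is no equivalence. $\Lambda(t)$ is the gross gain from adding $t-m$ peripheral links, and the true constraint is $Q(t)-Q(m)\ge\Lambda(t)$. A violation of \eqref{cd} yields a profitable addition only one-sidedly, through $Q(m)\ge 0$.

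Part 2 inherits the problem. With the paper's bounds and $Q(t)=ct$:
\begin{enumerate}
\item $\Upsilon(t)/t$ is increasing, so your upper endpoint is $\Upsilon(1)$ rather than $\Upsilon(m)/m$.
\item $\Lambda(t)/t=\Lambda(m+q)/(m+q)$ is maximized at $t=n-1$, giving $\Lambda(n-1)/(n-1)$. The correct threshold divides the gain by the $k-1$ added links instead. Note also that $\Lambda$ vanishes at $t=m$, not at $0$, so your ``normalization at $0$'' argument does not apply to it.
\end{enumerate}
So your check that the endpoints coincide with those of Proposition~\ref{prop:k-core-feasibility}, from which Figure~\ref{fig:moons} is computed, would fail. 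To recover the figure, use the marginal constraints $Q(m)-Q(t)\le\Upsilon(m)-\Upsilon(t)$ and $Q(t)-Q(m)\ge\Lambda(t)$. With linear cost, the appendix's facts that $g_1$ is decreasing and $g_2$ is increasing reduce them to $\Lambda(n-1)/(k-1)\le c\le\Upsilon(m)/m$. This is exactly the proposition's interval, written in the notation of Theorem~\ref{Thm_mk}.
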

The proof relies on a series of auxiliary statements  showing that the binding deviation is always that of a periphery node. 
It is in the appendix.
\section{Conclusion}

We study a unilateral link sponsorship game in which agents trade off linear costs against benefits from a broad class of walk based centralities with positive and weakly decreasing weights. Our main result provides a sharp equilibrium prediction: every Nash equilibrium network is core periphery. The key force is structural complementarity. Linking to already well connected nodes creates many additional short walks at once, and since short walks receive higher weight, incentives concentrate links on a dense subset that serves as the core.

We also characterize efficiency. Despite the richness of equilibrium core sizes, the welfare maximizing network is always extreme: either empty or complete. This highlights a general tension between individually optimal stratification and socially optimal integration in environments where access benefits are cumulative.

Several extensions are natural. One direction is to relax homogeneous link costs or allow directed benefits, which may generate richer patterns of sponsorship while preserving the walk based logic. Another is to study increasing or convex costs more systematically and to quantify how the size of the equilibrium core varies with parameters across specific centrality specifications such as Katz and total communicability.

\bibliography{bib}
\appendix   
\section{Proof of the Injection Lemma \ref{lemma:deviation} and Proposition \ref{props:aux2}}
Let $G=(N,E)$ be a graph and let 
$i,j\in N$ be two distinct nodes.
Let $T_j\subseteq N_G(j)\setminus\{i\}$ be a subset of neighbors of $j$ that does not include agent $i$ such that $T_j\cap N_G(i)=\varnothing$. Assume that $N_G(j)\setminus\{i\}\subseteq T_j\cup \big(N_G(i)\big)$. That is $T_j$ is the subset of all agents that are neighbors of $j$ (not including $i$) that are not neighbors of $i$. Let $E_j=\{\{j,f\}|f\in T_j\}\}.$ 

Let $G'=(V,E')$ the graph that is obtained by adding the edges $E_i=\{\{i,f\}|f\in T'_j\}\}$ such that $T_j\subseteq T'_j$. 
Note that $N_{G'}(j)\setminus\{i\}\subseteq N_{G'}(i)\setminus\{j\}$.

We define a mapping \(F_i^j:\Psi^{E_j}_{G,j}\to\Psi^{E_i}_{G',i}\).
Let \(\gamma=(v_0,\ldots,v_T)\in\Psi^{E_j}_{G,j}\), where \(v_0=j\). Let \(\tau\) be the first index such that \(\{v_\tau,v_{\tau+1}\}\in E_j\). Thus \(\{v_\tau,v_{\tau+1}\}=\{j,k\}\) for some \(k\in T_j\). Let \(\sigma\) be the transposition of \(i\) and \(j\), that is, \(\sigma(i)=j\), \(\sigma(j)=i\), and \(\sigma(f)=f\) for every \(f\notin\{i,j\}\). We define \(\gamma'=F_i^j(\gamma)\) as follows.

We first define an auxiliary initial segment \(P_q\), for every \(q\leq \tau-1\). This is the image of the segment \((v_0,\ldots,v_q)\). If \(q=0\), set \(P_q=(i)\). If \(q\geq 1\) and \(v_1\neq i\), set \(P_q=(i,v_1,\ldots,v_q)\). Finally suppose that \(q\geq 1\) and \(v_1=i\). Let \(t\) be the first index in \(\{2,\ldots,q\}\) such that \(v_t\notin\{i,j\}\), with the convention that \(t=q+1\) if no such index exists. If \(t=q+1\), set \(P_q=(i,\sigma(v_1),\ldots,\sigma(v_q))\). If \(t\leq q\) and \(\{\sigma(v_{t-1}),v_t\}\in E\), set \(P_q=(i,\sigma(v_1),\ldots,\sigma(v_{t-1}),v_t,\ldots,v_q)\). If \(t\leq q\) and \(\{\sigma(v_{t-1}),v_t\}\notin E\), then necessarily \(\sigma(v_{t-1})=j\), and we delete this last switched \(j\); in this case set \(P_q=(i,\sigma(v_1),\ldots,\sigma(v_{t-2}),v_t,\ldots,v_q)\).

We now define \(\gamma'\) according to the orientation of the first edge from \(E_j\) traversed by \(\gamma\). 
First suppose that \((v_\tau,v_{\tau+1})=(j,k)\). Let \(a\) be the first index of the maximal block in \(\{i,j\}\) ending at \(v_\tau=j\); that is, \(v_a,\ldots,v_\tau\in\{i,j\}\), and either \(a=0\) or \(v_{a-1}\notin\{i,j\}\). If \(a=0\), set \(\gamma^-=(\sigma(v_0),\ldots,\sigma(v_\tau))\). If \(a>0\), then \(P_{a-1}\) ends at \(v_{a-1}\). If \(\{v_{a-1},\sigma(v_a)\}\in E\), set
\(\gamma^-=P_{a-1}\circ(\sigma(v_a),\ldots,\sigma(v_\tau))\).
If \(\{v_{a-1},\sigma(v_a)\}\notin E\), then necessarily \(\sigma(v_a)=j\), and we delete this first switched \(j\); in this case set
\(\gamma^-=P_{a-1}\circ(\sigma(v_{a+1}),\ldots,\sigma(v_\tau))\).
Finally, set \(\gamma'=\gamma^-\circ(k,v_{\tau+2},\ldots,v_T)\).

Now suppose that \((v_\tau,v_{\tau+1})=(k,j)\). We keep \(P_{\tau-1}\) unchanged and modify the continuation from \(k\) onward. If the walk stops at \(j\), set \(\gamma'=P_{\tau-1}\circ(k,i)\). If the walk continues as \((k,j,f,\ldots)\) with \(f\neq i\), set \(\gamma'=P_{\tau-1}\circ(k,i,f,\ldots)\). It remains to consider the case in which the walk continues as \((k,j,i,\ldots)\). Let \(s\) be the first index \(s\geq \tau+2\) such that \(v_s\notin\{i,j\}\). If no such \(s\) exists, set \(\gamma'=P_{\tau-1}\circ(k,\sigma(v_{\tau+1}),\ldots,\sigma(v_T))\). Otherwise, if \(\{\sigma(v_{s-1}),v_s\}\in E\), set \(\gamma'=P_{\tau-1}\circ(k,\sigma(v_{\tau+1}),\ldots,\sigma(v_{s-1}),v_s,\ldots,v_T)\). Otherwise, necessarily \(\sigma(v_{s-1})=j\); in this case we delete this last switched \(j\) and set \(\gamma'=P_{\tau-1}\circ(k,\sigma(v_{\tau+1}),\ldots,\sigma(v_{s-2}),v_s,\ldots,v_T)\).

This completes the construction of \(\gamma'=F_i^j(\gamma)\).
The deletion clauses are precisely those needed to keep the resulting sequence a valid walk in \(G'\).
Thus \(F_i^j(\gamma)\) is a walk in \(G'\) starting at \(i\). Moreover, by Observation~\ref{obs:first-Ei-edge} below, it crosses an edge in \(E_i\). Hence \(F_i^j(\gamma)\in \Psi^{E_i}_{G',i}\).

\begin{observation}\label{obs:first-Ei-edge}
Let \(\gamma\in\Psi^{E_j}_{G,j}\), and let \(e=\{j,k\}\in E_j\) be the first edge from \(E_j\) crossed by \(\gamma\). Then the first edge from \(E_i\) crossed by \(\gamma'=F_i^j(\gamma)\) is \(\{i,k\}\).
\end{observation}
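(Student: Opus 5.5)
The plan is to trace the construction of $\gamma'=F_i^j(\gamma)$ in two parts. First, show that every edge of the portion of $\gamma'$ built before $k$ is reached lies in $E$, the edge set of $G$. Second, show that the very next step of $\gamma'$ is the edge $\{i,k\}$. Since $E_i=\{\{i,f\}:f\in T'_j\}$ consists of edges added to $G$, I take as given (from the construction of $G'$) that $T'_j\cap N_G(i)=\varnothing$, so $E_i\cap E=\varnothing$. Then no edge in the prefix can belong to $E_i$, and $\{i,k\}\in E_i$ because $k\in T_j\subseteq T'_j$. This is exactly the claim.

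\emph{Prefix edges lie in $E$.} I will check this for the segments $P_q$ ($q\le\tau-1$) and for $\gamma^-$, edge type by edge type.
\begin{itemize}
\item Edges copied verbatim from $\gamma$ before index $\tau$ are edges of $G$.
\item Inside a maximal block of consecutive vertices in $\{i,j\}$, consecutive vertices are distinct and adjacent. Each such edge is therefore $\{i,j\}$, and $\sigma$ maps it to itself, so it remains an edge of $G$.
\item The junctions $\{\sigma(v_{t-1}),v_t\}$ and $\{v_{a-1},\sigma(v_a)\}$ are used only when the construction checks that they lie in $E$.
\item In the deletion clauses the dropped vertex is $\sigma(v_{t-1})=j$, so $v_{t-1}=i$ and, by alternation inside the block, $v_{t-2}=j$ and $\sigma(v_{t-2})=i$. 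The new junction is then $\{i,v_t\}=\{v_{t-1},v_t\}\in E$. The deletion at the start of a block is handled symmetrically.
\item The leading vertex $i$ of $P_q$ is joined to $v_1$ (when $v_1\neq i$) via $\{i,v_1\}=\{\sigma(v_0),v_1\}$. Here I must check $\{i,v_1\}\in E$. I expect this case, together with the case $v_1=i$, to be the main obstacle. The hypothesis $N_{G'}(j)\setminus\{i\}\subseteq N_{G'}(i)\setminus\{j\}$ is what makes $i$ adjacent (in $G'$) to every non-$E_j$ neighbour of $j$. For such $v_1$ the edge $\{j,v_1\}\notin E_j$ (since $\tau$ is the first $E_j$ index), so $v_1\notin T_j$, and hence $v_1\in N_G(i)$ by the covering assumption $N_G(j)\setminus\{i\}\subseteq T_j\cup N_G(i)$. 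That places the edge in $E$, not in $E_i$.
\end{itemize}

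\emph{The step into $k$.} In the orientation $(v_\tau,v_{\tau+1})=(j,k)$, the segment $\gamma^-$ ends at $\sigma(v_\tau)=\sigma(j)=i$, and $\gamma'$ continues to $k$, so the next edge is $\{i,k\}$. In the orientation $(v_\tau,v_{\tau+1})=(k,j)$, I first argue that $v_{\tau-1}\notin\{i,j\}$.
\begin{itemize}
\item $v_{\tau-1}\neq i$, since $k\in T_j$ is not a neighbour of $i$ in $G$.
\item $v_{\tau-1}\neq j$, since otherwise the edge $\{j,k\}\in E_j$ would have been crossed at index $\tau-1$, contradicting the minimality of $\tau$.
\end{itemize}
Hence $P_{\tau-1}$ ends at $v_{\tau-1}$, and the edge $\{v_{\tau-1},k\}$ is an original edge of $G$. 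All subcases of the construction then append $i$ immediately after $k$: either $(k,i)$, or $(k,i,f,\ldots)$, or $(k,\sigma(v_{\tau+1}),\ldots)$ with $\sigma(v_{\tau+1})=\sigma(j)=i$. So the first edge after the prefix is again $\{k,i\}$.

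Combining the two parts, $\{i,k\}$ is the first edge of $\gamma'$ lying in $E_i$.
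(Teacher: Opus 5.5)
Your proof is correct and follows the paper's argument: both trace the construction of $F_i^j$ and note that the part of $\gamma'$ before the edge $\{i,k\}$ is obtained from the part of $\gamma$ before its first $E_j$-crossing only by swapping $i$ and $j$ and deleting switched $j$'s. The paper concludes that an earlier $E_i$-crossing would correspond to an earlier $E_j$-crossing, whereas you prove the equivalent invariant that every prefix edge lies in $E$; you also make explicit the covering hypothesis $N_G(j)\setminus\{i\}\subseteq T_j\cup N_G(i)$ (needed for $\{i,v_1\}$, and used only implicitly in the paper) and check the step into $k$ in both orientations, which the paper leaves to the reader.
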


\begin{proof}
Since \(\gamma\) is a walk in \(G\), it cannot cross any of the added edges in \(E_i\). Before the first crossing of \(e=\{j,k\}\), the walk \(\gamma\) does not cross any edge from \(E_j\). In the construction of \(F_i^j\), the part of \(\gamma'\) preceding the corresponding edge \(\{i,k\}\) is obtained from this initial part of \(\gamma\) only by switching \(i\) and \(j\), and possibly deleting a switched \(j\). Hence any earlier crossing of an edge from \(E_i\) in \(\gamma'\) would correspond to an earlier crossing of an edge from \(E_j\) in \(\gamma\), contradicting the choice of \(e\). Therefore the first edge from \(E_i\) crossed by \(\gamma'\) is \(\{i,k\}\).
\end{proof}
For any walk $\gamma=(v_0,\dots,v_T)$ recall that $\ell(\gamma)=T$ is the length of $\gamma$.
We next show two key properties of $F^j_i$.
\begin{lemma}\label{lem:bijection}
The map $F_i^j:\Psi^{E_j}_{G,j}\to\Psi^{E_i}_{G',i}$ is injective and satisfies $\ell(F_i^j(\gamma))\le \ell(\gamma)$ for all $\gamma\in\Psi^{E_j}_{G,j}$.
\end{lemma}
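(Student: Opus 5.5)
The statement has two parts, and I would prove them separately. The length bound is immediate from the construction. The injectivity requires a decoding argument.

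\textbf{Length bound.} In every branch of the definition of $F_i^j$, the image $\gamma'$ is obtained from $\gamma$ by three operations:
\begin{enumerate}
\item replacing $v_0=j$ by $i$;
\item applying the transposition $\sigma$ to maximal blocks of consecutive vertices lying in $\{i,j\}$;
\item deleting at most one switched $j$ per such block.
\end{enumerate}
Since $\sigma$ is a bijection on vertex labels, positions are never inserted, only relabelled or deleted. Hence $\ell(\gamma')\le\ell(\gamma)$. I would check this branch by branch: the prefixes $P_q$, then $\gamma^-$, then the continuation after $k$. Equality of lengths holds except where a deletion clause fires.

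\textbf{Injectivity: recover $\tau$ and the orientation.} Suppose $\gamma'=F_i^j(\gamma)$. By Observation~\ref{obs:first-Ei-edge}, the first $E_i$-edge crossed by $\gamma'$ is $\{i,k\}$ with $k\in T_j$, so $\gamma'$ determines both $k$ and the position of this crossing.

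Next I would read off the orientation. If the crossing is traversed as $(i,k)$, then $\gamma$ first crossed $E_j$ as $(j,k)$. If it is traversed as $(k,i)$, then $\gamma$ crossed it as $(k,j)$. This works because the construction sends the first $E_j$-edge $\{j,k\}$ to $\{i,k\}$ with the same orientation.

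\textbf{Injectivity: invert the prefix, the middle, and the suffix.}
\begin{itemize}
\item \emph{Prefix.} I would show that $\gamma\mapsto P_q$ can be inverted. Apply $\sigma$ to the initial $\{i,j\}$-block. If a deletion occurred, reinsert the missing $j$. A deletion is detectable: it happens exactly when two consecutive vertices of $\gamma'$ at the block boundary are $i$ followed by some $v_t\notin\{i,j\}$ with $\{j,v_t\}\in E$ but $\{i,v_t\}\notin E$. One must verify that such a configuration cannot arise without a deletion. This uses the hypothesis $N_{G}(j)\setminus\{i\}\subseteq T_j\cup N_G(i)$ together with the fact that $\{i,v_t\}\in E$ whenever $v_t$ is a $G$-neighbour of $j$ outside $T_j$.
\item \emph{Middle block.} The same detection applies to $\gamma^-$.
\item \emph{Suffix.} The continuation after $k$ is copied verbatim, except for the $(k,j,i,\ldots)$ branch, which is inverted in the same way.
\end{itemize}

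\textbf{Main obstacle.} The hard part is the case analysis showing that the deletion clauses are unambiguous, i.e., that a deleted and an undeleted preimage never collide. I would try a parity or edge-membership invariant: a deletion occurs iff the resulting adjacent pair in $\gamma'$ uses an $E$-edge incident to $i$ whose counterpart at $j$ is missing. I would check that the non-deletion branch never produces such a pair. If this fails, I would fall back on a direct argument: two distinct preimages first differ at some index, and I would show that their images differ at a corresponding index.

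Non-surjectivity is not part of this lemma. It follows separately from walks using edges $\{i,f\}$ with $f\in T'_j\setminus T_j$, or from the one-edge walk $(i,k)$ having a unique preimage.
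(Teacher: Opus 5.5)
Your plan follows the paper's proof. You locate the first $E_i$-edge $\{i,k\}$ and its orientation via Observation~\ref{obs:first-Ei-edge}, copy the part after $k$, and invert the switched blocks by detecting deletions from adjacencies. The gap is the step you yourself call the main obstacle, and the detection rule in your \emph{Prefix} item is backwards.

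In $P_q$ a deletion happens when $v_{t-1}=i$ and $\{j,v_t\}\notin E$. What remains in $\gamma'$ is an $i$ (namely $\sigma(v_{t-2})$, or the initial $i$) followed by $v_t$, with $\{i,v_t\}\in E$ and $\{j,v_t\}\notin E$. Your condition ``$\{j,v_t\}\in E$ but $\{i,v_t\}\notin E$'' is just $v_t\in T_j$, so it recognises a special edge, not a deletion. It can never hold at a prefix boundary, since before the first $E_i$-crossing $\gamma'$ is a walk in $G$. So as written your detector never fires there. For example, $\gamma=(j,i,x,\ldots)$ with $x\in N_G(i)\setminus N_G(j)$ has image $(i,x,\ldots)$, which your rule decodes to $(j,x,\ldots)$; that is not a walk in $G$. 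Also, after undoing $\sigma$, the vertex to reinsert is an original $i$ (whose switched copy $j$ was deleted), not a $j$.

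The invariant in your last paragraph is the correct one, and it needs no parity argument or fallback; the paper's test is a short check. Undo $\sigma$ on the switched block and test whether the boundary pair is an edge of $G$. The boundary pair is $\{\sigma(w_r),w_{r+1}\}$ when $w_r$ is the last vertex of the initial block or of the post-crossing block. It is $\{w_{r-1},\sigma(w_r)\}$ when $w_r$ is the first vertex of the block ending at $v_\tau=j$.
\begin{itemize}
\item Without a deletion, this pair is literally a consecutive pair of $\gamma$, hence in $E$.
\item With a deletion, it is $\{j,v_t\}$, $\{v_{a-1},j\}$ or $\{j,v_s\}$, which is not in $E$ by the very condition that triggered the deletion.
\end{itemize}
The test reads only $\gamma'$, and each outcome forces a unique preimage, which gives injectivity. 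This uniform test also covers the case $(k,j,i,f,\ldots)\mapsto(k,i,f,\ldots)$, where a one-vertex block hides a deletion.

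One further point for the $(k,j,i,\ldots)$ continuation. There the forward adjacency test must be taken in $G'$ rather than $E$. Otherwise the claim ``necessarily $\sigma(v_{s-1})=j$'' fails when $v_{s-1}=j$ and $v_s\in T_j$, for example for $\ldots,k,j,i,j,k_2$ with $k_2\in T_j$. With that reading, your invariant and the test above go through unchanged.
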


\begin{proof}
The inequality \(\ell(F_i^j(\gamma))\leq \ell(\gamma)\) is immediate from the construction. We prove injectivity. Let \(\gamma'=F_i^j(\gamma)=(w_0,\ldots,w_{\widetilde T})\). Let \(s\) be the first index such that \(\{w_s,w_{s+1}\}=\{i,h\}\) for some \(h\in T_j\). By construction, this edge is obtained from the first edge \(\{j,k\}\in E_j\) crossed by \(\gamma\), and hence \(h=k\). Thus \(\gamma'\) identifies the first edge \(\{j,k\}\in E_j\) crossed by \(\gamma\). Moreover, the orientation of this edge identifies the orientation of the crossing: if \((w_s,w_{s+1})=(i,k)\), then the original crossing was \((j,k)\), while if \((w_s,w_{s+1})=(k,i)\), then the original crossing was \((k,j)\).

Suppose first that \((w_s,w_{s+1})=(i,k)\). Then the original crossing was \((j,k)\). In this case \(\gamma'\) has the form \(\gamma^-\circ(k,w_{s+2},\ldots,w_{\widetilde T})\), where \(\gamma^-=(w_0,\ldots,w_s)\). Hence the suffix after \(k\) is recovered directly as \((w_{s+2},\ldots,w_{\widetilde T})\).

It remains to recover the prefix ending at the original vertex \(v_\tau=j\). Let \(r\leq s\) be the first index of the maximal terminal block of \(\gamma^-\) contained in \(\{i,j\}\). If \(r=0\), then this terminal block is all of \(\gamma^-\), and applying \(\sigma\) recovers \((v_0,\ldots,v_\tau)\).

Suppose \(r>0\). Then \(w_{r-1}\notin\{i,j\}\). Apply \(\sigma\) to the terminal block \((w_r,\ldots,w_s)\). If \(\{w_{r-1},\sigma(w_r)\}\in E\), then the resulting block is the original block \(v_a,\ldots,v_\tau\). Otherwise, the construction must have deleted the first switched \(j\). The deleted vertex was the image under \(\sigma\) of an original \(i\); after applying \(\sigma\), inserting this missing \(i\) at the beginning of the block recovers \(v_a,\ldots,v_\tau\). The preceding part \((w_0,\ldots,w_{r-1})\) is \(P_{a-1}\). By the definition of \(P_{a-1}\), this segment has a unique inverse: replace the initial \(i\) by \(j\), and, if an initial block in \(\{i,j\}\) was switched, apply \(\sigma\) back to that block, inserting the missing original \(i\) exactly when needed to obtain a valid walk in \(G\). Hence \((v_0,\ldots,v_{a-1})\) is uniquely recovered and so is
\[
\gamma=(v_0,\ldots,v_\tau,k,w_{s+2},\ldots,w_{\widetilde T}).
\]

Now suppose that \((w_s,w_{s+1})=(k,i)\). Then the original crossing was \((k,j)\). In this case \((w_0,\ldots,w_{s-1})=P_{\tau-1}\), and \(w_s=k\). By the definition of \(P_{\tau-1}\), the segment \((w_0,\ldots,w_{s-1})\) uniquely determines \((v_0,\ldots,v_{\tau-1})\): one replaces the initial \(i\) by \(j\), and, if an initial block in \(\{i,j\}\) was switched, applies \(\sigma\) back to that block, inserting the missing original \(i\) exactly when needed to obtain a valid walk in \(G\). Hence \((v_0,\ldots,v_{\tau-1})\) is recovered, and \(w_s=k\) gives \(v_\tau=k\).

It remains to recover the continuation after the original crossing \((k,j)\). If \(s+1=\widetilde T\), then the image stops at \(i\), so the original walk stopped at \(j\). If \(s+2\leq \widetilde T\) and \(w_{s+2}\notin\{i,j\}\), then the original continuation was \((j,w_{s+2},\ldots,w_{\widetilde T})\), and it is recovered directly.

Finally suppose that \(s+2\leq \widetilde T\) and \(w_{s+2}\in\{i,j\}\). Let \(r\geq s+1\) be the last index of the maximal block in \(\{i,j\}\) starting at \(w_{s+1}=i\). Apply \(\sigma\) to the block \((w_{s+1},\ldots,w_r)\). If \(r=\widetilde T\), this recovers the original continuation. If \(r<\widetilde T\) and \(\{\sigma(w_r),w_{r+1}\}\in E\), then the resulting block followed by \((w_{r+1},\ldots,w_{\widetilde T})\) recovers the original continuation. If \(r<\widetilde T\) and \(\{\sigma(w_r),w_{r+1}\}\notin E\), then the construction must have deleted the last switched \(j\). The deleted vertex was the image under \(\sigma\) of an original \(i\); after applying \(\sigma\), inserting this missing \(i\) at the end of the block recovers the original continuation.
\end{proof}
We next proceed to the proof of Lemma \ref{lemma:deviation}.
Recall that $G=(N,E)$ is a graph and $i\in N$. There are $k$ distinct agents $j_1, \dots, j_k$ (different from $i$) and associated edge sets $E_1, \dots, E_k$ satisfying four conditions:
\begin{enumerate}
    \item $E_l \subseteq E$ and $j_l \in e$ for every $e \in E_l$.
    \item $E_l \cap E_m = \varnothing$ for distinct $l, m$.
    \item  If $\{j_l, f\} \in E_l$, then $f \notin N_G(i) \cup \{i\}$.
    \item If $\{j_l, f\} \in E_l$ and $\{j_m, g\} \in E_m$ with $l \neq m$, then $f \neq g$.
\end{enumerate}
Let $G'$ be the graph obtained by adding the set of edges $E_i = \{ \{i,f\} : \exists l \text{ s.t. } \{j_l, f\} \in E_l \}$ to $G$. 

\injectivelem*
\begin{proof}
If $k=1$ let $F=F^{j_1}_i$. The injectivity follows from the definition of $F^{j_1}_i$ and Lemma \ref{lem:bijection} since $F^{j_1}_i$ is injective.

Next consider the case where $k\geq 2$. Note first that
$\Psi^{E_l}_{G,j_l}\cap\Psi^{E_f}_{G,j_f}=\varnothing$
for any two distinct indices $l,f\in\{1,\dots,k\}$.
This follows since every walk in $\Psi^{E_l}_{G,j_l}$ starts at $j_l$,
whereas every walk in $\Psi^{E_f}_{G,j_f}$ starts at $j_f$, and
$j_l\neq j_f$.

Thus we define $F:\Psi\to\Psi^{E_i}_{G',i}$ as follows. For
$\gamma\in\Psi^{E_l}_{G,j_l}$, set
\[
F(\gamma)=F^{j_l}_i(\gamma),
\]
where $F^{j_l}_i$ is the map constructed in Lemma~\ref{lem:bijection}
with respect to the graph $G'$. Since the sets
$\Psi^{E_l}_{G,j_l}$ are pairwise disjoint, $F$ is well defined.
Moreover, by Lemma~\ref{lem:bijection},
\[
\ell(\gamma)\geq \ell(F(\gamma))
\]
for every $\gamma\in\Psi$.

To see that \(F\) is injective, suppose that
\(F(\gamma)=F(\hat\gamma)\). Let
\(\gamma\in\Psi^{E_l}_{G,j_l}\) and
\(\hat\gamma\in\Psi^{E_m}_{G,j_m}\).

First suppose that \(l=m\). Then both walks belong to the same domain
\(\Psi^{E_l}_{G,j_l}\), and on this domain \(F\) coincides with
\(F_i^{j_l}\). Since \(F_i^{j_l}\) is injective by
Lemma~\ref{lem:bijection}, we have \(\gamma=\hat\gamma\).

It remains to consider the case \(l\neq m\). By
Observation~\ref{obs:first-Ei-edge}, the first edge from \(E_i\)
traversed by \(F(\gamma)\) is \(\{i,h\}\), where
\(\{j_l,h\}\in E_l\). Similarly, the first edge from \(E_i\)
traversed by \(F(\hat\gamma)\) is \(\{i,h'\}\), where
\(\{j_m,h'\}\in E_m\). By Condition~4 above, \(h\neq h'\). Hence the
first edge from \(E_i\) traversed by \(F(\gamma)\) is different from the
first edge from \(E_i\) traversed by \(F(\hat\gamma)\). Therefore
\(F(\gamma)\neq F(\hat\gamma)\), contradicting our supposition.
Thus  \(F\) is
injective.

To see that $F$ is not onto, choose an edge $\{j_1,f\}\in E_1$. Then
$\{i,f\}\in E_i$, and hence the walk $\gamma'=(i,f,i)$ belongs to
$\Psi^{E_i}_{G',i}$. We claim that $\gamma'\notin F(\Psi)$.

Suppose otherwise that $\gamma'=F(\gamma)$. Since $\gamma'$ uses the
unique edge $\{i,f\}$ from $E_i$, Observation~\ref{obs:first-Ei-edge}
implies that the first special edge used by $\gamma$ must be
$\{j_1,f\}$. In the construction of $F_i^{j_1}$, after the edge
$\{i,f\}$ is created, the image walk continues from $f$ only along
edges of the original graph $G$. Thus, in order for the image to move
next from $f$ to $i$, we would need $\{f,i\}\in E$. This contradicts
Condition~3, which implies $f\notin N_G(i)$. Therefore
$\gamma'\notin F(\Psi)$, and $F$ is not surjective.
\end{proof}

We next turn to prove Proposition \ref{props:aux2}.
Recall that $i,j$ are such  that $N_G(i)\setminus\{j\}\not\subseteq N_G(j)\setminus\{i\}$. That is some neighbor of $j$ is not a neighbor of $i$. Let $T_j=\big(N_G(j)\setminus N_G(i)\big)\setminus\{i\}$ be the set of such agents. Let $E_j=\{\{j,k\}:k\in T_j\}\subseteq E$ and let $E_i=\{\{i,k\}:j\in T_j\}$. By assumption $E_i\cap E=\varnothing$. Let $G'=(N,E')$ be network that is obtained by adding the  edges $E_i$ to $E$. Let $e'=\{i,k\}\in E_i$ and let $e=\{j,k\}\in E$. 
\auxprop*
\begin{proof}[\textbf{Proof of Proposition \ref{props:aux2}}]
We define a  mapping $H:\Psi^{E_j}_{G}\to\Psi^{E_i}_{G'}$ that is one to one and satisfies $\ell(\gamma)\leq \ell(F(\gamma))$ for every $\gamma\in \Psi^{E_j}_{G}$. To do that, for every node $k\in N$ let $\Psi^{E_j}_{G,k}$ be the walks in $\Psi^{E_j}_{G}$ starting at node $k$. Define  $\Psi^{E_i}_{G',k}$ similarly. 

Let $k\neq i,j$. We first define $F_k:\Psi^{E_j}_{G,k}\to\Psi^{E_i}_{G',k}$. Let $\gamma=(v_0,\ldots,v_T)\in\Psi^{E_j}_{G,k}$ and let $0\leq\tau\leq T-1$ be the first time that $\{v_\tau,v_{\tau+1}\}\in E_j$. 
Thus $\{v_\tau,v_{\tau+1}\}=\{j,f\}$ for some $f 
\in T_j$.  

Define $\gamma'=(v'_0,\ldots,v'_{T'})\in\Psi^{E_i}_{G',k}$ as follows, distinguishing cases according to the \emph{directed} traversal of the first edge in $E_j$.

\smallskip
\noindent\textbf{Case 1:} $(v_\tau,v_{\tau+1})=(f,j)$ for some $f\in T_j$.
Set $T'=T$, put $v'_{\tau+1}\defeq i$, and $v'_t\defeq v_t$ for all $t\neq \tau+1$.
(Then $\{v'_\tau,v'_{\tau+1}\}=\{f,i\}\in E_i$.)

\smallskip
\noindent\textbf{Case 2:} $(v_\tau,v_{\tau+1})=(j,f)$ for some $f\in T_j$.

\smallskip
\noindent\emph{Case 2a:} $v_{\tau-1}\neq i$.
Set $T'=T$, put $v'_\tau\defeq i$, and $v'_t\defeq v_t$ for all $t\neq \tau$.

\smallskip
\noindent\emph{Case 2b:} $v_{\tau-1}=i$.
Write $v_{\tau-2}=\ell$ (so the prefix is $\ell\to i\to j\to f$). If $\{\ell,j\}\in E$, set $T'=T$ and define
\[
(v'_{\tau-2},v'_{\tau-1},v'_{\tau},v'_{\tau+1})\defeq (\ell,j,i,f),
\]
with $v'_t\defeq v_t$ for all other $t$.
If $\{\ell,j\}\notin E$, set $T'=T-1$ and delete the intermediate visit to $j$ by defining
$v'_t\defeq v_t\ \ (0\le t\le \tau-1),\
v'_{t-1}\defeq v_t \ (\tau+1\le t\le T).$
(So the segment $i\to j\to f$ is replaced by $i\to f$, which is valid since $\{i,f\}\in E_i$.)

Note that by construction $\ell(F_k(\gamma))\leq \ell(\gamma)$. To see injectivity, fix $\gamma\in\Psi^{E_j}_{G,k}$ and let $\gamma'=F_k(\gamma)$. Since $\gamma'\in\Psi^{E_i}_{G',k}$, there exists a (unique) first time
$
t^\ast\defeq \min\{t:\{v'_t,v'_{t+1}\}\in E_i\}.$

We claim that $F_k$ (for $k\neq i,j$).
Fix $\gamma\in\Psi^{E_j}_{G,k}$ and let $\gamma'=F_k(\gamma)=(v'_0,\dots,v'_{T'})$. Since $E_i\cap E=\varnothing$, every traversal of an
edge in $E_i$ in $\gamma'$ is necessarily created by the replacement rule. Let
$t^\ast\defeq \min\{t:\{v'_t,v'_{t+1}\}\in E_i\}.$
Then $t^\ast$ is well defined and uniquely determines the location of the first replacement.

Write $\{v'_{t^\ast},v'_{t^\ast+1}\}=\{i,f\}$ with $f\in T_j$ and distinguish cases.

If $(v'_{t^\ast},v'_{t^\ast+1})=(f,i)$, then necessarily the original first $E_j$-traversal was $(f,j)$, so we recover $\gamma$ by setting
$v_{\tau+1}=j$ and $v_t=v'_t$ for all other $t$ (thus $T=T'$ and $\tau=t^\ast$).

If $(v'_{t^\ast},v'_{t^\ast+1})=(i,f)$, then the original first $E_j$-traversal was $(j,f)$, and we look at the predecessor of $i$.
Let $\ell$ denote the vertex preceding this $i$ in $\gamma'$ (if any), i.e.\ $\ell\defeq v'_{t^\ast-1}$ when $t^\ast\ge 1$.
\begin{itemize}
\item If $t^\ast=0$ or $\ell\neq j$, then we are in the ``direct'' case $v_{\tau}=j$ with $v_{\tau-1}\neq i$, and we invert by setting
$v_{\tau}=j$ and $v_t=v'_t$ for all other $t$ (so $T=T'$ and $\tau=t^\ast$).

\item If $\ell=j$, then the replacement must have been applied to a local pattern $\ell\to i\to j\to f$ in $\gamma$.
Now check whether $t^\ast\ge 2$ and whether the vertex $v'_{t^\ast-2}$ is adjacent to $j$ in $G$.
Write $v'_{t^\ast-2}=r$ (when $t^\ast\ge 2$).
If $\{r,j\}\in E$, then the forward map used the ``swap'' rule and did not change length; hence $T=T'$ and we invert locally by replacing
the triple $(r,j,i,f)$ in $\gamma'$ with $(r,i,j,f)$, leaving all other vertices unchanged.

If $\{r,j\}\notin E$, then the forward map used the ``contraction'' rule and deleted the intermediate visit to $j$; hence $T=T'+1$ and
we invert by replacing the triple $(r,i,f)$ in $\gamma'$ with the quadruple $(r,i,j,f)$, leaving all other vertices unchanged and shifting
subsequent indices by $+1$.
\end{itemize}
In all cases the preimage $\gamma$ is uniquely determined by $\gamma'$ and the graph adjacencies, so $F_k$ is injective.

We next consider the case of $k=j$. In this case, it follows from the construction Lemma \ref{lemma:deviation} that there exists an injection $F_j:\Psi^{E_j}_{G,j}\to \Psi^{E_i}_{G',i}$. Next we define a mapping $F_i:\Psi^{E_j}_{G,i}\to \Psi^{E_j}_{G',j}$. Let $\gamma=(v_0,\ldots,v_T)\in \Psi^{E_j}_{G,i}$. We define $\gamma'=(v'_0,\ldots,v'_T)\in F_i(\gamma)$ Let $\tau$ be the minimal $t\leq T-1$ such that $\{v_\tau,v_{\tau+1}\}=\{j,f\}\in E_j$. If $v_\tau=j$ define 
$\gamma'$ as follows: $\gamma'=(v_\tau,v_{\tau-1}\ldots,v_0,v_{\tau+1},v_{\tau+2}\ldots,v_T)$. Thus $\gamma'$ starts at $v_\tau=j$ and then moves \emph{backwards} untill it reaches $v_0=i$ then continues at $v_{\tau+1}$ and then follows $\gamma$. Note that by construction $(i,f)=(v_0,v_{\tau+1})$ and $\{i,f\}\in E_i$ so $\gamma'\in\Psi^{E_i}_{G',j}$. 

Similarly if $(v_\tau,v_{\tau+1})=(f,j)$ let $\gamma'=(v_{\tau+1},v_{\tau}\ldots,v_0,v_{\tau+1},\ldots,v_T)$. Again by construction $(i,f)=(v_0,v_{\tau+1})\in E_i$. This completes the definition of $F_i$. It is easy to see that $legth(\gamma)=\ell(F_i(\gamma).$ In this case, injectivity follows stright for the definition. To complete the proof we need, using the maps $\{F_k\}_{k\in N},$ to define the mapping $H:\Psi^{E_j}_{G}\to\Psi^{E_i}_{G'}$. 

We now complete the proof. Using the preceding construction, for every node $k\in N$ we have defined an injective map
\[
F_k:\Psi^{E_j}_{G,k}\to \Psi^{E_i}_{G',k}
\quad\text{such that}\quad
\ell(\gamma)\ge \ell(F_k(\gamma))\ \ \text{for all }\gamma\in\Psi^{E_j}_{G,k}.
\]
(For $k\neq i,j$ this is the map constructed above. For $k=j$ we use the injection $F_j:\Psi^{E_j}_{G,j}\to\Psi^{E_i}_{G',i}$, and for $k=i$ we use the map $F_i:\Psi^{E_j}_{G,i}\to\Psi^{E_i}_{G',j}$ defined above.)

Define $H:\Psi^{E_j}_{G}\to\Psi^{E_i}_{G'}$ by
\[
H(\gamma)\defeq F_{v_0}(\gamma)\qquad\text{for }\gamma=(v_0,\ldots,v_T)\in\Psi^{E_j}_{G}.
\]
Since the sets $\{\Psi^{E_j}_{G,k}\}_{k\in N}$ are disjoint and each $F_k$ is injective, $H$ is injective. Moreover, by construction,
\[
\ell(\gamma)\ge \ell(H(\gamma))\qquad\text{for all }\gamma\in\Psi^{E_j}_{G}.
\]
Because $a_1\ge a_2\ge\cdots>0$, it follows that
\[
b^{E_j}_{G}
=\sum_{\gamma\in\Psi^{E_j}_{G}}a_{\ell(\gamma)}
\le \sum_{\gamma\in\Psi^{E_j}_{G}}a_{\ell(H(\gamma))}
<\sum_{\eta\in\Psi^{E_i}_{G'}}a_{\ell(\eta)}
=b^{E_i}_{G'}.
\]
The reason for the strict inequality follows from the fact that, as we showed in Lemma \ref{lem:bijection} there exists a walk $\gamma'\in\Psi^{E_i}_{G'}$ such that $\gamma'\neq H(\gamma)$ for every $\gamma\in\Psi^{E_j}_{G}$.

Hence adding the edges $E_i$ (going from $G$ to $G'$) increases the walk term by at least $b^{E_i}_{G'}-b^{E_j}_{G}\ge 0$, while the edge term decreases by exactly $-|E_i|=-|E_j|$. Therefore
\[
W_{G'}-W_G
=\bigl(\sum_{\gamma\text{ walk in }G'}a_{\ell(\gamma)}-\sum_{\gamma\text{ walk in }G}a_{\ell(\gamma)}\bigr)-(|E'|-|E|)
\ge b^{E_i}_{G'}-|E_i|
\ge b^{E_j}_{G}-|E_j|\geq 0,
\]
as desired. This completes the proof of the proposition.
\end{proof}

\subsection{Proof of Theorem \ref{Thm_mk}}
\subsubsection{Possible deviations from the core-periphery strategy $s^*$.}
None of the nodes from the core, $\{1,\dots,m\}$, can add more edges, as  they are connected to all other nodes  under $s^*$.
Any core node can deviate by eliminating $q\le\lfloor{\frac{m}2}\rfloor$ sponsored edges. However, this deviation never imposes any binding restrictions on costs, as such deviation will never be profitable as long as neither of the
periphery nodes is interested in deleting any of their $q\le m$ edges to the core. 

Indeed, a core node $c$ that dropped $q$ edges to its core neighbors and is left with $m-1-q$ edges to the core, has a higher benefit  than a periphery node $p$ with the same set of core neighbors. This follows from Lemma \ref{lemma:deviation}: the set of neighbors of the core node $c$ includes all the neighbors  of $p$ and, in addition, the set of all periphery nodes $P$.

Thus it is sufficient to only rule out deviations of the periphery nodes.   (1) A deviator can possibly delete $q\le m$ edges  to the core nodes and (2) add $q\le k-1$ edges to the other periphery nodes. 
Both deviations are perturbations of the adjacency matrix:
$A_i^q=\A-E_i^q,\quad i=1,2$,
so that%
\begin{align*}
E_1^q=we_p^{\top}+w^{\top}e_p,\quad E_2^q=-ve_p^{\top}-v^{\top}e_p,
\end{align*}
 where  $p\in P=\{m+1,m+k\}$ denotes the deviating periphery node, $e_i\in\real^n$ denotes a standard basis (column) vector, 
 $w=\sum_{i\in C'\subset C}e_i$ is the vector of indicators of the $q$ disconnected core elements and
$v=\sum_{i\in P'\subset P}e_i$ is the vector of indicators of the $q$ newly connected periphery elements. 
Both $E_i,i=1,2$ are rank 2 matrices. The change  in benefits $(\Delta b(A^i_q))_p$ of the node $p$ will yield the corresponding bounds on the cost function. We  calculate this change for any number of modified edges ($q$). Clearly, the value is the same for any deviating periphery node.
We start by calculating the benefits under the candidate equilibrium profile $s^*$.

\subsubsection{Benefits under $s^*$.}
For the specification of Katz-Bonacich centrality given in the theorem, the benefit vector  can be presented as follows
\[
b(A) = y - \mathbf{1},
\quad
y = \R\mathbf{1},\quad 
\R = (M)^{-1},\quad M=I - \delta A
\]

Given strategy \(s^*\), the adjacency matrix \(A(s^*)=\A\) is
\[
\A =
\begin{pmatrix}
J_m - I_m & J_{m\times k} \\
J_{k\times m} & 0
\end{pmatrix},
\]
where \(J_{m\times k}\) is the \(m\times k\) all-ones matrix,
\(J_m=J_{m\times m}\), and \(I_m\) is the \(m\times m\) identity.
Then 
\[
M=I - \delta \A
=
\begin{pmatrix}
(1+\delta) I_m - \delta J_m & -\delta J_{m\times k} \\
-\delta J_{k\times m} & I_k
\end{pmatrix}.
\]
The resolvent \(\R=(I-\delta \A)^{-1}\) has the  block form
\[
\R \;=\;
\begin{pmatrix}
\dfrac{1}{1+\delta}\, I_m \;+\; \dfrac{\delta(1+\delta k)}{D(1+\delta)}\, J_m
&
\dfrac{\delta}{D}\, J_{m\times k}
\\[10pt]
\dfrac{\delta}{D}\, J_{k\times m}
&
I_k \;+\; \dfrac{\delta^2 m}{D}\, J_k
\end{pmatrix},
\quad
D = 1 - \delta(m-1) - \delta^2 k m.
\]
Each component $i\in C$ from the first $m$ components of $y$ is
\[
y_C =\frac{1}{1+\delta}+\frac{m\delta(1+\delta k)}{D(1+\delta)}+\frac{k\delta}{D}= \frac{1+\delta k}{D}.
\]
Each component $j\in P$ of the last $k$ components of $y$ are
\[
y_P =\frac{m\delta}{D}+1+\frac{k\delta^2 m}{D}
= \frac{1+\delta}{D}.
\]
The utility of the periphery agent $j\in P$ under $s^*$ is 
$u_P(s^*)=y_P-1-Q({m})$.
If $m$ is odd, the utility of the core $i\in C$ agent  is 
$u_C(s^*)=y_C-1-Q(\frac{m-1}2)$ and if $m$ is even
\begin{align*}
u_1(s^*)=y_C-1,\quad u_i(s^*)=y_C-1-Q(\frac{m}2),\quad i=2,\dots,m
\end{align*}
A core agent has a higher utility under $s^*$: 
$u_C(s^*)-u_P(s^*)\ge y_C-y_P\ge\frac{\delta}D(k-1)>0$.

\subsubsection{The deviations benefits}
The change in benefit before and after deviation $i=1,2$ for node $p\in P$ is the $p$-th element of the vector 
\begin{align}
\Delta b(A_i^q)=(\R-\R_i^q)\mathbf{1}, \quad \R_i^q=(M_i^q)^{-1},\quad M_i^q=I-\delta A_i^q=I - \delta(\A-E_i^q)
\end{align}
Each deviation (for any number of modified links $q$) is a rank-2 perturbation of the adjacency matrix:
\begin{align}
    &E_i^q\;=\;U_i^q (V_i^q)^{\top},\\
    &U_1^q=\bigl[\,w\ \ e_p\,\bigr],\quad
   V_1^q=\bigl[\,e_p\ \ w\,\bigr],\quad
   U_2^q\;=\; \bigl[\,-v\ \ -e_p\,\bigr],\quad
V_2^q=\bigl[\,e_p\ \ v\,\bigr]\\
&w=\sum_{i\in C'\subset C}e_i ,\quad
v=\sum_{i\in P'\subset P}e_i,\quad|C'|=q,\quad|P'|=q
\end{align}
Using Woodbury low rank update,%
\footnote{If $M\in\mathbb{R}^{n\times n}$ is invertible and $U,V\in\mathbb{R}^{n\times z}$, then
\[
(M+U V^\top)^{-1}
\;=\;
M^{-1} \;-\; M^{-1} U\bigl(I_{z} + V^\top M^{-1} U\bigr)^{-1} V^\top M^{-1}.
\]}
\[
\R_i^q
\;=\;
\R
\;-\;
\R\,(\delta U_i^q)\,
\Bigl(I_{2} + (V_i^q)^\top \R\,(\delta U_i^q)\Bigr)^{-1}(V_i^q)^\top \R
\]

and the identity $y = \R\mathbf{1}$, we get
\begin{align}
\Delta b(A_i^q)=\delta\R U_i^q\,
(H_i^q)^{-1}(V_i^q)^\top y,\quad H_i^q=I_{2} + \delta (V_i^q)^\top \R U_i^q
\end{align}
The determinant $h_i(q)$ of the $2\times2$ matrix $H_i^q$ is strictly positive for the assumed range of $\delta$ for both deviations $i$.

To calculate the loss from dropping the edges, we compute
\[
V_1^\top y
=
\begin{pmatrix}
e_p^\top y\\[2pt]
w^\top y
\end{pmatrix}
=
\begin{pmatrix}
y_P\\[2pt]
q\,y_C
\end{pmatrix},\quad
e_p^\top \R U_1
=
\Bigl[\ \tfrac{\delta q}{D}\,,\ 1+\tfrac{\delta^2 m}{D}\ \Bigr].
\]
\[
(H_1^q)^{-1}
=
\bigl(I_2+\delta V_1^\top \R U_1\bigr)^{-1}
=
\frac{1}{h_1(q)}
\begin{pmatrix}
\displaystyle 1+\frac{\delta^2 q}{D}
&
\displaystyle -\Bigl(\delta + \frac{\delta^3 m}{D}\Bigr)
\\[10pt]
\displaystyle -\Bigl(\frac{\delta}{1+\delta}\,q
+\frac{\delta^2(1+\delta k)}{D(1+\delta)}\,q^2\Bigr)
&
\displaystyle 1+\frac{\delta^2 q}{D}
\end{pmatrix},
\]
where 
\[
h_1(q)
=
\Bigl(1+\frac{\delta^2 q}{D}\Bigr)^{\!2}
\;-\;
\frac{\delta}{1+\delta}\,
\Bigl(1+\frac{\delta^2 m}{D}\Bigr)\,
\Bigl(\delta q+\frac{\delta^2(1+\delta k)}{D}\,q^2\Bigr)>0.
\]
The change in benefit before and after deleting $q$ edges is
\begin{align}
&(\Delta b(A_1^q))_p
=
\frac{\delta\,q\;\bigl(\alpha_1+\beta_1\,q\bigr)}{h_1(q)}>0,\\
&\alpha_1
=
y_C
\;+\;
\frac{\delta^2 m\,\bigl(2+\delta(2k-1)\bigr)}{D^{2}}>0,\quad
\beta_1
=
-\,\frac{\delta^2\bigl(1+\delta(k-1)\bigr)}{D^2}<0,\\
&h_1(q)
=
1
+
a_1 q
+
a_2 q^2,\\
&a_1=\frac{\delta^2\bigl(1+\delta(m+1)+\delta^2 m (k-1)\bigr)}
{D\,(1+\delta)}>0,\quad
a_2=-\,\frac{\delta^3\bigl(1+\delta(k-1)\bigr)}
{D\,(1+\delta)}<0.
\end{align}
Similar derivation for the case of change in benefit before and after adding edges yields
\begin{align*}
&(\Delta b(A_2^q))_p
\;=\;
\frac{\delta\,q\bigl(\alpha_2+\beta_2 q\bigr)}{h_2(q)}
<0\\
&\alpha_2
\;=\;
-\left[\,y_P\,(D+\delta^2 m) + \frac{\delta^2 m}{D}\,\right]<0,
\qquad
\beta_2 \;=\; -\,\frac{\delta^3 m}{D}<0
\\
&h_2(q)
\;=\;
1
-\gamma_1\,q
-\gamma_2\,q^2,\\
&\gamma_1=\Bigl[\delta^2 + \frac{\delta^3 m}{D}(2+\delta)\Bigr]>0,\quad
\gamma_2=\frac{\delta^4 m}{D}>0
\end{align*}
\subsection{Average loss of dropping edges is decreasing}
Define a real-valued function on reals: $$g_1(q)=\frac{\delta\,(\alpha_1+\beta_1 q)}{1+a_1 q+a_2 q^2}=\frac{(\Delta b(A_1^q))_p}q.$$
Then $g_1(q)$ is decreasing in $q$ and is strictly positive for all $0\le q\le m$.
\begin{proof}
The sign of $g_1'(q)$ is the sign of 
\[
N(q):=\beta_1 h_1(q)-(\alpha_1+\beta_1 q)\,h_1'(q)= \beta_1 - \alpha_1 a_1 - 2\alpha_1 a_2\,q - \beta_1 a_2\,q^2.
\]

The quadratic polynomial \(N(q)=c_0+c_1 q+c_2 q^2\) has coefficients
\[
c_0=\beta_1-\alpha_1 a_1<0,\qquad
c_1=-2\alpha_1 a_2>0,\qquad
c_2=-\beta_1 a_2<0,
\]
because \(\alpha_1>0\), \(\beta_1<0\), \(a_1>0\), \(a_2<0\).
Hence \(N(q)\) is a downward-opening quadratic (\(c_2<0\))
whose maximum is at
\[
q^\star \;=\; -\frac{c_1}{2c_2}
\;=\; -\,\frac{-2\alpha_1 a_2}{2(-\beta_1 a_2)}
\;=\; \frac{\alpha_1}{\beta_1}\;<\;0,
\]
since \(\alpha_1>0\) and \(\beta_1<0\).
Therefore, on the feasible domain \(q\ge 0\), the function \(N(q)\) lies
strictly to the right of its (unique) maximum and is thus strictly decreasing.
Because \(N(0)=\beta_1-\alpha_1 a_1<0\), it follows that
$
N(q)<0$  for all  $q\ge 0$.
Thus, $g_1$ is decreasing.

$\Delta b(A_1^q))_p$ evaluated at $q=m$ is positive as it is the benefit of the periphery node under $s^*$, which equals $y_P-1>0$.
\end{proof}
\subsubsection{Average benefit from adding edges is increasing}
Let $g_2$ be defined on reals:
$$ g_2(q)=\frac{-(\alpha_2+\beta_2 q)}{h_2(q)}=-\frac{(\Delta b(A_2^q))_p}q.$$
Then $g_2(q)$ is strictly increasing in $q$.
\begin{proof}The derivative of $g_2$ is positive, because the numerator of its derivative is 
$$-\beta_2  h_2(q)+(\alpha_2+\beta_2 q)h'_2(q)$$ which is positive because
$h_2(q)>0$, $\beta_2<0$,
$h'_2(q)=\gamma_1-2\gamma_2<0$, $\alpha_2+\beta_2 q<0$.
\end{proof}
\subsubsection{Constructing the bounds on the supporting cost function.}
Cost of the first $m$ edges should not be too high in order to prevent any periphery agent from dropping any of its $m$ edges to the core. The loss of dropping edges is increasing in the number of edges dropped, $q$, and is concave in $q$, because the average loss is decreasing. For any $t\in \{1,\dots,m\}$ the benefit from having $t$ edges for a periphery agent is the same as the difference between the equilibrium payoff (having $m$ links to the core), which is $y_P-1$, and the change in the benefit due to dropping $q=m-t$ edges. This difference is increasing in $t$ and is  convex in $t$. So the first requirement is
\begin{align}\label{Cless}
Q(t)\le \Upsilon(t;m,k,\delta):= (y_P-1)-(\Delta b(A_1^q))_p|_{q=m-t},\quad t=\{1,\dots,m\}
\end{align}
To prevent the second type of deviation, the cost has to be above the benefit of adding any amount of edges up to $k-1$, which equals $-(\Delta b(A_2^q))_p|_{q=t-m}$. It is increasing and convex in the number of added edges. So, to prevent any addition of an edge by a peripheral node it is sufficient to require that the cost is high enough for $t\in\{m+1,\dots,m+k-1\}$:
\begin{align}\label{Cmore}
Q(t)\ge \Lambda(t;m,k,\delta):=-(\Delta b(A_2^q))_p|_{q=t-m},\quad t\in\{m+1,\dots,m+k-1\}
\end{align}

Both bounds can be written explicitly in terms of $\delta$, $t$, $m$, and $k$.

This completes the proof of Theorem \ref{Thm_mk}.

\end{document}